\newcommand{\ve}[1]{\mathchoice{\mbox{\boldmath$\displaystyle\bf#1$}}
{\mbox{\boldmath$\textstyle\bf#1$}}
{\mbox{\boldmath$\scriptstyle\bf#1$}}
{\mbox{\boldmath$\scriptscriptstyle\bf#1$}}}
\newcommand{\vea}{{\ve a}}
\newcommand{\vealpha}{{\ve \alpha}}
\newcommand{\veb}{{\ve b}}
\newcommand{\veg}{{\ve g}}
\newcommand{\vel}{{\ve l}}
\newcommand{\vep}{{\ve p}}
\newcommand{\veu}{{\ve u}}
\newcommand{\vew}{{\ve w}}
\newcommand{\vex}{{\ve x}}
\newcommand{\vey}{{\ve y}}
\newcommand{\vez}{{\ve z}}
\newcommand{\A}{A^{(n)}}
\newcommand{\R}{\mathbb{R}}
\newcommand{\Z}{\mathbb{Z}}
\newcommand{\N}{\mathbb{N}}
\newcommand{\Oh}{{\cal O}}
\DeclareMathOperator{\poly}{{\rm poly}}
\newcommand{\FPT}{{\sf FPT}\xspace}
\newcommand{\XP}{{\sf XP}\xspace}
\newcommand{\NP}{{\sf NP}\xspace}
\newcommand{\W}[1]{{\sf W}[#1]\xspace}
\begin{document}

\title{Scheduling meets $n$-fold Integer Programming\thanks{This research
was partially supported by the project 14-10003S of GA \v{C}R and project 1784214 GA UK.}}



\author{Dušan Knop  \and Martin Koutecký
}


\institute{Department of Applied Mathematics (KAM), \\
	        Charles University in Prague, Czech Republic.\\
	        \email{\{knop, koutecky\}@kam.mff.cuni.cz} }

\maketitle

\begin{abstract}
Scheduling problems are fundamental in combinatorial optimization. Much work has been done on approximation algorithms for \NP-hard cases, but relatively little is known about exact solutions when some part of the input is a fixed parameter. In 2014, Mnich and Wiese initiated a systematic study in this direction.

In this paper we continue this study and show that several additional cases of fundamental scheduling problems are fixed parameter tractable for some natural parameters. Our main tool is $n$-fold integer programming, a recent variable dimension technique which we believe to be highly relevant for the parameterized complexity community. This paper serves to showcase and highlight this technique.

Specifically, we show the following four scheduling problems to be fixed-parameter tractable, where $p_{max}$ is the maximum processing time of a job and $w_{max}$ is the maximum weight of a job:
\begin{itemize}
\item Makespan minimization on uniformly related machines ($Q||C_{max}$) parameterized by $p_{max}$,
\item Makespan minimization on unrelated machines ($R||C_{max}$) parameterized by $p_{max}$ and the number of kinds of machines (defined later),
\item Sum of weighted completion times minimization on unrelated machines ($R||\sum w_jC_j$) parameterized by $p_{max}+w_{max}$ and the number of kinds of machines,
\item The same problem, $R||\sum w_jC_j$, parameterized by the number of distinct job times and the number of machines.
\end{itemize}


\end{abstract}

\section{Introduction}
Scheduling problems are one of the fundamental classes of problems in combinatorial optimization since 1960s \cite{Allahverdi:15,LawlerLKS:93,Potts:09} and many variants of scheduling turn out to be \NP-hard. In response to this one can either look for an approximate solution, or restrict the input in a certain way. Approximation algorithms for scheduling have been an established area of research for a long time now~\cite{LawlerLKS:93}. On the other hand, parameterizing the input in order to obtain exact results has not been studied much before. We say that a problem $P$ with input of size $n$ is \textit{fixed-parameter tractable} (\FPT) with respect to parameter $k$ if there exists a computable function $f$ which does not depend on $n$ and an algorithm solving $P$ with running time $f(k)\poly(n)$; we call an algorithm with this running time \FPT algorithm. Regarding scheduling, Mnich and Wiese~\cite{MnichW:14} have recently initiated a systematic study of the relationship of various scheduling problems and their parameterizations, proving both positive and negative results. In this paper we continue in this direction, examining three additional fundamental scheduling problems and their parameterizations, and devising \FPT algorithms for them.

However, our goal is not merely to prove new positive results. In their work, Mnich and Wiese rely on mathematical programming techniques in \textit{fixed} dimension, which have been introduced in 1983 by Lenstra~\cite{Lenstra:83} and significantly extended in 2000 by Khachiyan and Porkolab~\cite{KhachiyanP:00}. These techniques are by now well established in the \FPT community, even though the power of the latter extension of Khachiyan and Porkolab has not been fully utilized yet, as we will discuss further on. Independently of this, a new theory of \textit{variable} dimension optimization has been developed in the past 15 years; see Onn's book~\cite{Onn:10}. A breakthrough result is an \FPT algorithm for the so-called $n$-fold integer programming ($n$-fold IP) by Hemmecke, Onn and Romanchuk~\cite{HemmeckeOR:13}. In contrast to the fixed dimension techniques, $n$-fold IP is not yet established as an indispensable part of an \FPT researchers toolbox. In this paper we would like to help change that.

Let us now introduce $n$-fold IP. Given $nt$-dimensional integer vectors $\veb, \veu, \vel, \vew$, $n$-fold integer programming ($n$-fold IP) is the following problem in variable dimension $nt$:
\begin{equation}\label{NIP}
\min\left\{\vew\vex\,\colon\A\vex=\veb\,,\
\vel\leq\vex\leq\veu\,,\ \vex\in\Z^{nt}\right\}\ ,
\end{equation}

where
\begin{equation*}\label{NFold}
A^{(n)}\quad:=\quad
\left(
\begin{array}{cccc}
  A_1    & A_1    & \cdots & A_1    \\
  A_2    & 0      & \cdots & 0      \\
  0      & A_2    & \cdots & 0      \\
  \vdots & \vdots & \ddots & \vdots \\
  0      & 0      & \cdots & A_2    \\
\end{array}
\right)\quad
\end{equation*}
is an $(r+ns)\times nt$ matrix with $A_1$ an $r\times t$
matrix and $A_2$ an $s\times t$
matrix. Let $a$ be the biggest absolute value of a number in $A^{(n)}$. The vector $\vex$ is naturally partitioned into $n$ \textit{bricks} of size $t$, that is, we index it as $\vex = (x^1_1, x^1_2, \dots, x^1_t, \dots, x^n_1, \dots, x^n_t)$. As such, $n$-fold IP is best suited for \textit{multi-index} problems whose IP formulation has variables indexed by $[n] \times [l_1] \times \dots \times [l_k]$ for some integers $n, l_1, \dots, l_k$ such that $l_1, \dots, l_k$ are fixed parameters and only $n$ is variable.

Hemmecke, Onn and Romanchuk~\cite{HemmeckeOR:13} prove that there is an \FPT algorithm solving problem (\ref{NIP}) with parameters $r,s,t$ and $a$. We will state and extend their result together with further observations in Section~\ref{sec:nfoldip} 

\subsection{Our contribution}

We consider three non-preemptive scheduling models of increasing generality: parallel identical, uniformly related and unrelated machines (in the standard notation~\cite{LawlerLKS:93} denoted by $P, Q$ and $R$, respectively), and the two most common objective functions: minimizing makespan and sum of weighted completion times (denoted $C_{max}$ and $\sum w_jC_j$, respectively).

For \textit{identical} machines, the problem consists of a set of $n$ \textit{jobs} $J = \{J_1, \dots, J_n\}$ and $m$ \textit{machines} $M = \{M_1, \dots, M_m\}$, and each job $J_j$ has a \textit{processing time} $p_j \in \N$. For \textit{uniformly related} machines, we additionally have for each machine $M_i$ its \textit{speed} $s_i \in \N$, such that processing job $J_j$ on machine $M_i$ takes time $p_j / s_i$. For \textit{unrelated} machines, we have for each job $J_j$ an $m$-dimensional vector $\vep=(p_j^1, \dots, p_j^m)$, $p_j^i \in \N \cup \{\infty\}$ for all $i$, such that processing job $J_j$ on machine $M_i$ takes time $p_j^i$ (in case $p_j^i = \infty$, $J_j$ cannot be executed on $M_i$). We also consider a restricted variant of the unrelated machines model where there are $K$ \textit{kinds} of machines and the vector of processing times for a job $J_j$ is given with respect to kinds of machines: $\vep = (p_j^1, \dots, p_j^K)$, such that processing $J_j$ on machine $M_i$ of kind $k$ takes time $p_j^k$.
Additionally, for the sum of weighted completion times objective, we are given for each job $J_j$ its \textit{weight} $w_j \in \N$.

A \textit{schedule} is an assignment of jobs to machines and times, such that every machine is executing at most one job at any time. For a job $J_j$ we denote by $C_j$ its \textit{completion time}, that is, the time $J_j$ finishes. In makespan minimization, the goal is to minimize $C_{max} = \max_{J_j \in J} C_j$. When minimizing the sum of weighted completion times, the goal is to minimize $\sum_{J_j \in J} w_jC_j$. For example, the problem of minimizing makespan in the identical machines model is denoted $P || C_{max}$. 

The parameters we consider are the following:
\begin{itemize}
\item $p_{max}$: the maximum processing time of any job,
\item $w_{max}$: the maximum weight of any job,
\item $m$: the number of machines,
\item $\theta$: the number of distinct job processing times and weights (in case of the $\sum w_j C_j$ objective); note that $\theta$ generalizes parameter $p_{max}$,
\item $K$: the number of kinds of machines (defined above).
\end{itemize}

In all of the cases we consider, we use such a combination of parameters that the number $\Theta$ of distinct job \textit{types} is bounded, where jobs of a given type are indistinguishable from each other. This means that the set of jobs $J$ on input can be given compactly by specifying integers $n_1, \dots, n_{\Theta}$ with $n = n_1 + \dots + n_\Theta$, such that $n_j$ denotes the number of jobs of type $j$ that are to be scheduled. Modeling our terminology after Onn~\cite{OnnS:15}, we call a problem \textit{huge} when the numbers $n_j$ on input are given in binary. For example, the \textsc{Cutting Stock} problem can be seen as the huge variant of the \textsc{Bin Packing} problem. All of our results work for the huge variant. 

We show that:


\begin{theorem}\label{thm:nfold_sched}
The following scheduling problems are \FPT with respect to parameter $\Theta$ as defined below and solvable in time $\Theta^{\Oh(\Theta^2)}n^{\Oh(1)}$, where 
\begin{enumerate}
\item $Q || C_{max}$: $\Theta = p_{max}$ \label{thm:qcmax_it}
\item $R || C_{max}$: $\Theta = p_{max}^K$ \label{thm:rcmax_it}
\item $R || \sum w_j C_j$: $\Theta = (\max \{p_{max}, w_{max}\})^K$ \label{thm:rwici_it}
\end{enumerate}
\end{theorem}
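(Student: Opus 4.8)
The strategy is to model each scheduling problem as an $n$-fold integer program and then invoke the \FPT algorithm for $n$-fold IP (from Section~\ref{sec:nfoldip}) with the parameters $r,s,t,a$ all bounded by a function of $\Theta$. The common skeleton is this: since the number of job \emph{types} is bounded by $\Theta$, a schedule is fully described (up to reordering of identical jobs on a machine, which is irrelevant for both objectives) by specifying, for each machine, how many jobs of each type it processes. I would introduce one brick per machine, so $n$ in the $n$-fold formulation equals the number of machines $m$ (this is why the results hold even for the \emph{huge} variant—$m$ is the variable dimension and the job multiplicities $n_1,\dots,n_\Theta$ enter only through the right-hand side $\veb$, read in binary). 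Within a brick, variables $x^i_\tau$ count jobs of type $\tau$ placed on machine $i$, so $t=\Theta$ (or $\Theta$ plus a few auxiliary variables). The \emph{linking} constraints $A_1$ enforce that every job is scheduled: $\sum_{i} x^i_\tau = n_\tau$ for each type $\tau$, giving $r=\Theta$. The \emph{local} constraints $A_2$ per brick encode the per-machine feasibility/cost, and $a$ is bounded because all processing times and weights are at most $p_{max}$ resp.\ $w_{max}$, which are at most $\Theta$ in every case.

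The three items differ only in what the local block must express. For \ref{thm:qcmax_it}, $Q\|C_{max}$: here $\Theta=p_{max}$ simply because two jobs with equal processing time are interchangeable, so $\Theta\le p_{max}$ job types; machines, however, need not be few. The trick (standard for uniformly related machines with bounded $p_{max}$) is that the makespan $C$ can be guessed/binary-searched, and on a machine of speed $s$ the load $\sum_\tau \tau\, x^i_\tau \le \lfloor Cs\rfloor$; grouping machines by speed keeps the number of \emph{distinct} bricks small, but more simply one brick per machine works with the bound $Cs_i$ put into $\veu$ or into a local inequality. A subtlety is that speeds $s_i$ may be large, so the coefficient on the load constraint must stay small: one writes the load bound as a box constraint on an auxiliary slack rather than as a large coefficient, keeping $a$ bounded by $p_{max}$. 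For \ref{thm:rcmax_it}, $R\|C_{max}$ with $\Theta=p_{max}^K$: a job type is now a vector $(p^1,\dots,p^K)\in\{1,\dots,p_{max},\infty\}^K$ of processing times across the $K$ machine kinds, of which there are at most $(p_{max}+1)^K$; again one brick per machine, the local constraint being $\sum_\tau p^{k(i)}_\tau x^i_\tau \le C$ where $k(i)$ is the kind of machine $i$, and infeasible assignments ($p^{k(i)}_\tau=\infty$) are forbidden via the upper bound $\vel\le\vex\le\veu$. For \ref{thm:rwici_it}, $R\|\sum w_jC_j$ with $\Theta=(\max\{p_{max},w_{max}\})^K$: a job type is a vector of $(p^k,w^k)$ pairs (the weight may in general be machine-independent, but allowing it to vary per kind only squares the base), so $\Theta\le(\max\{p_{max},w_{max\}}+1)^{2K}$ job types. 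The one real modeling point is the objective: on a single machine, given the multiset of jobs assigned, the sum of weighted completion times is minimized by \emph{Smith's rule}—process in nondecreasing order of $p_j/w_j$—and the resulting optimal value is a fixed function of the counts $x^i_\tau$. One expands $\sum_j w_jC_j$ for a fixed machine as $\sum_{\tau}\sum_{\tau'} (\text{pairwise interaction of types }\tau,\tau')$, which is a quadratic form in the brick variables; since $n$-fold IP as stated is linear, I would linearize by introducing, per ordered pair of types $(\tau,\tau')$, integer variables $y^i_{\tau,\tau'}$ counting scheduled pairs and tie them to the $x$'s by local constraints $y^i_{\tau,\tau'}=x^i_\tau x^i_{\tau'}$—which is again nonlinear, so instead I would use the standard fix: sort types once by Smith's ratio and use the prefix-sum variables $z^i_\tau=\sum_{\tau'\preceq\tau} x^i_{\tau'}$, making the per-machine cost $\sum_\tau w_\tau \tau\, x^i_\tau + \sum_\tau w_\tau z^i_{\prec\tau}\cdot(\text{something linear})$; careful bookkeeping (done in the full proof) turns this into a linear local objective with coefficients of size $\poly(\Theta)$ and $\Oh(\Theta)$ extra local variables, so $t,s=\Oh(\Theta)$ still.

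With $r,s,t=\Oh(\Theta)$ and $a=\Oh(\Theta)$ in all three cases, the $n$-fold IP algorithm of Section~\ref{sec:nfoldip} runs in time $\big(a(rt+st)\big)^{\Oh(rt+st)}\cdot n^{\Oh(1)} = \Theta^{\Oh(\Theta^2)}n^{\Oh(1)}$ (the binary search over the makespan $C$ in items \ref{thm:qcmax_it}--\ref{thm:rcmax_it} contributes only a $\log$ factor, absorbed into $n^{\Oh(1)}$), which is exactly the claimed bound. Feasibility of the IP corresponds to existence of a schedule, and optimality to the minimum objective value, so correctness is immediate once the formulations are verified. The main obstacle is item~\ref{thm:rwici_it}: getting the $\sum w_jC_j$ objective—naturally a quadratic function of the per-machine type counts because of the quadratic number of delaying pairs under Smith's rule—into the \emph{linear} form required by $n$-fold IP, while keeping the number of local variables and the largest coefficient polynomial in $\Theta$. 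Everything else (items \ref{thm:qcmax_it} and \ref{thm:rcmax_it}, and the parameter-counting that bounds $\Theta$) is routine bookkeeping once the $n$-fold template is set up.
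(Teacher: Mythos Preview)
Your handling of items~\ref{thm:qcmax_it} and~\ref{thm:rcmax_it} is essentially the paper's: one brick per machine, globally uniform type-counting constraints, locally uniform load constraints with the machine-dependent data (speed $s_i$, resp.\ kind $k(i)$) pushed into the right-hand side and the upper/lower bounds, binary search over the makespan. That part is fine.

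The gap is in item~\ref{thm:rwici_it}. You correctly recognize that the obstacle is the objective, but your proposed fix---``careful bookkeeping \dots\ turns this into a \emph{linear} local objective''---does not work. With the types ordered by Smith's ratio and prefix sums $z^i_\tau = \sum_{\tau'\preceq\tau} p_{\tau'} x^i_{\tau'}$, the contribution of machine $i$ is
\[
\sum_{\tau} w_\tau x^i_\tau \cdot z^i_{\prec\tau} \;+\; \sum_{\tau} w_\tau p_\tau\,\frac{x^i_\tau(x^i_\tau+1)}{2},
\]
which is genuinely quadratic in the brick variables: the first sum contains bilinear terms $x^i_\tau x^i_{\tau'}$ and the second contains $(x^i_\tau)^2$. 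Since the multiplicities $n_\tau$ are given in binary (the huge variant), the $x^i_\tau$ can be exponentially large, so you cannot linearize these products with polynomially many auxiliary variables or constraints while keeping $a=\Oh(\Theta)$. The standard $n$-fold IP theorem you invoke in your final paragraph only handles a \emph{linear} objective, so the argument as written does not close.

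What the paper does instead is observe (Corollary~\ref{cor:separable_convex}) that after introducing the prefix sums $z^i_j$, the per-machine cost rewrites as $\sum_j \tfrac12 (z^i_j)^2\,(\rho_i(j)-\rho_i(j+1)) + \text{linear}$, which is \emph{separable convex} in the $z$-variables. The point is not to make the objective linear but to make the quadratic part depend on each variable separately. The paper then invokes a separable-convex extension of the $n$-fold algorithm (Theorem~\ref{thm:nfoldip_convex}, obtained by combining the Graver-basis augmentation step of Hemmecke--Onn--Romanchuk with the proximity bound of Hemmecke--K\"oppe--Weismantel), which optimizes separable convex objectives over $n$-fold regions in the same $\Theta^{\Oh(\Theta^2)} n^{\Oh(1)}$ time. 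That extra tool is the missing ingredient in your plan.
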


\begin{theorem}\label{thm:rwicim}
$R || \sum w_j C_j$ is \FPT parameterized by $\Theta = m\theta^m$ and solvable in time $\Theta^{\Oh(\Theta)} n^{\Oh(1)}$.
\end{theorem}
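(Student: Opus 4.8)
The plan is to put an optimal schedule into a normal form described by only $\poly(\Theta)$ integers, to encode the search for it as the minimization of a convex quadratic function over the integer points of a polytope whose dimension is bounded in terms of $m$ and $\theta$, and then to apply the algorithm for convex integer minimization in fixed dimension (Khachiyan--Porkolab~\cite{KhachiyanP:00}, as recalled in Section~\ref{sec:nfoldip}). First I would fix the combinatorial structure: since $R\,||\,\sum w_j C_j$ has no release dates or deadlines and preemption is forbidden, some optimal schedule has no idle time on any machine and sequences the jobs on each machine $M_i$ by Smith's rule, i.e.\ in non-increasing order of $w_j/p_j^i$ (ties broken by an arbitrary fixed rule, which does not change the cost). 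Jobs of one type are interchangeable and, on a fixed machine, have equal Smith ratio and hence form one contiguous block; so the schedule is completely determined by the integers $x_{i,j}=$ the number of type-$j$ jobs assigned to $M_i$. A type is an $m$-tuple of processing times together with a weight, all from at most $\theta$ values, so there are at most $\theta^{\,m+1}$ types and the vector $(x_{i,j})_{i,j}$ lives in a dimension $D$ bounded by a function of $m$ and $\theta$ (the accounting giving the stated $\Theta=m\theta^m$ is routine).

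Next I would write the integer program. Its feasible region is the bounded polytope cut out by the covering equalities $\sum_{i=1}^m x_{i,j}=n_j$ for every type $j$, together with $x_{i,j}\ge 0$ and $x_{i,j}=0$ whenever $p_j^i=\infty$. For the objective, relabel the types that may occur on $M_i$ as $1,\dots,q_i$ in Smith order, let $P_i(a),W_i(a)$ be their processing time on $M_i$ and their weight, and $z_{i,a}$ the corresponding count; evaluating completion times inside each block and across blocks yields the cost of machine $M_i$ in the form
\[
f_i \;=\; \sum_{a=1}^{q_i} W_i(a)\,P_i(a)\binom{z_{i,a}+1}{2}\;+\;\sum_{a=1}^{q_i} W_i(a)\,z_{i,a}\!\!\sum_{b<a} P_i(b)\,z_{i,b}\;;
\]
the program then minimizes $\sum_{i=1}^m f_i$ over the integer points of the polytope.

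The heart of the matter --- and the step I expect to be the main obstacle --- is convexity of this objective, which is what makes the \emph{convex} integer programming machinery applicable. It suffices to check each $f_i$ is convex. Its Hessian is the matrix $H$ with $H_{ab}=W_i(\max(a,b))\,P_i(\min(a,b))$; writing $v_a=W_i(a)/P_i(a)$, which is non-negative and, by the Smith ordering, non-increasing in $a$, one gets $H_{ab}=P_i(a)\,P_i(b)\,\min(v_a,v_b)$, that is $H=DKD$ with $D=\mathrm{diag}\big(P_i(1),\dots,P_i(q_i)\big)$ and $K_{ab}=\min(v_a,v_b)$. The ``min matrix'' $K$ is positive semidefinite --- for instance $\min(v_a,v_b)=\int_0^\infty \mathbf{1}[t\le v_a]\,\mathbf{1}[t\le v_b]\,dt$ exhibits it as a Gram matrix --- so $H\succeq 0$. (This is precisely the pairwise-exchange inequality behind Smith's rule, $W_i(a)P_i(b)\ge W_i(b)P_i(a)$ for $a$ before $b$, in matrix form.) Hence $f_i$, and therefore $\sum_i f_i$, is convex.

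Finally I would feed this convex quadratic program in dimension $D$ to the algorithm of Section~\ref{sec:nfoldip}: minimizing a convex polynomial over the integer points of a polytope in dimension $D$ runs in time $D^{\Oh(D)}$ times a polynomial in the bit-size of the data, i.e.\ $\Theta^{\Oh(\Theta)}n^{\Oh(1)}$ for $\Theta=m\theta^m$. Since the multiplicities $n_j$, the processing times and the weights enter the program only through its right-hand side and its coefficients, the same bound holds for the huge variant in which the $n_j$ are encoded in binary. What remains is routine: verifying that the ``no idle time plus Smith's rule'' normal form is without loss of generality, deriving the displayed closed form of $f_i$, and the exact bookkeeping of the dimension $D$ in terms of $m$ and $\theta$.
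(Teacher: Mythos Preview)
Your proposal is correct and follows the same high-level plan as the paper: reduce to a convex integer program in a dimension bounded by a function of $m$ and $\theta$, and apply Khachiyan--Porkolab. The one genuine difference is how convexity of the objective is established. The paper introduces auxiliary prefix-sum variables $z_j^i=\sum_{l\le j} p_{\pi_i(l)}^i x_l^i$ and rewrites the cost, via the 2D Gantt chart triangle decomposition of Corollary~\ref{cor:separable_convex}, as a \emph{separable} convex function $\sum_{i,j}\tfrac12 (z_j^i)^2(\rho_i(j)-\rho_i(j+1))$ plus a linear term; convexity is then immediate since each summand is a nonnegative multiple of a square. You instead keep only the assignment variables, obtain a genuinely non-separable quadratic, and prove its Hessian $H_{ab}=W_i(\max(a,b))P_i(\min(a,b))$ is positive semidefinite by the elegant factorization $H=DKD$ with $K_{ab}=\min(v_a,v_b)$ a Gram matrix. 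Your route is more direct (fewer variables, no auxiliary linear constraints), while the paper's linearization buys separability---which is not needed here but is exactly what makes the same objective amenable to the $n$-fold machinery in the preceding subsection. A minor bookkeeping remark: you count $\theta^{m+1}$ types (processing-time vector plus weight), whereas the paper writes $\Theta\le\theta^m$; either way the dimension is $\Oh(m\theta^{m+1})$ and the stated running time $\Theta^{\Oh(\Theta)}n^{\Oh(1)}$ with $\Theta=m\theta^m$ absorbs this discrepancy.
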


Note that part \eqref{thm:qcmax_it} of Theorem~\ref{thm:nfold_sched} for the easier $P || C_{max}$ problem was already shown by Mnich and Wiese~\cite{MnichW:14}. However, our approach is substantially different and more straightforward, as demonstrated by the immediate extension to $Q || C_{max}$. Also, our result only has a single-exponential dependence on the parameter, unlike the double-exponential dependence of Mnich and Wiese. Thus, Theorem~\ref{thm:nfold_sched} serves to highlight the usefulness of $n$-fold integer programming in parameterized complexity. Theorem~\ref{thm:rwicim} differs as it is proved using the well known fixed dimension techniques of Khachiyan and Porkolab~\cite{KhachiyanP:00}.

In order to prove part \eqref{thm:rwici_it} of Theorem~\ref{thm:nfold_sched} we use an $n$-fold IP formulation and optimize a separable convex function over it. However, the algorithm of Hemmecke et al.~\cite{HemmeckeOR:13} only works for linear and certain restricted separable convex objectives. Thus, using the ideas of Hemmecke, Köppe and Weismantel~\cite{HemmeckeKW14}, we extend the previous result to optimizing any separable convex function.

We complement our positive findings by two hardness results:

\begin{theorem}\label{thm:sched_hardness}
$P || C_{max}$ and $P || \sum w_j C_j$ is \W{1}-hard when parameterized by $m$, even when job processing times and weights are given in unary.
\end{theorem}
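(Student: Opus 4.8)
The plan is to reduce from \textsc{Unary Bin Packing} parameterized by the number of bins, which is \W{1}-hard by the result of Jansen, Kratsch, Marx and Schlotter. Recall that an instance of this problem consists of $n$ items with sizes $a_1,\dots,a_n\in\N$ encoded in unary, a bin capacity $B$, and a number of bins $k$ with $\sum_j a_j = kB$; the question is whether the items can be partitioned into $k$ groups each of total size exactly $B$, and the parameter is $k$.

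For $P \,||\, C_{max}$ the reduction is immediate: given such an instance, create $m=k$ machines and $n$ jobs with $p_j=a_j$, and ask whether there is a schedule of makespan at most $B$. Since $\sum_j p_j = mB$, a schedule of makespan $\le B$ forces every machine to run without idle time in $[0,B]$ and carry a job set of total processing time exactly $B$; hence such a schedule exists if and only if the items fit into $k$ bins of capacity $B$. Processing times are exactly the (unary) item sizes and the number of machines equals the parameter $k$, so this is a parameterized reduction.

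For $P \,||\, \sum w_j C_j$ we use the same jobs and machines but additionally set $w_j = p_j = a_j$. The key observation is a convexity argument. In any schedule we may assume each machine runs its assigned jobs back-to-back starting at time $0$, since $w_j\ge 0$ and shifting a job earlier never increases the objective; moreover, when $w_j=p_j$ the elementary identity $\sum_j p_j\sum_{i\le j}p_i=\tfrac12\bigl((\sum_j p_j)^2+\sum_j p_j^2\bigr)$, applied to the jobs of a single machine in any order, shows that the contribution of a machine carrying a job set $S$ with load $L_S=\sum_{j\in S}p_j$ equals $\tfrac12\bigl(L_S^2+\sum_{j\in S}p_j^2\bigr)$, independently of the order. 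Summing over the $m$ machines, a schedule inducing loads $L_1,\dots,L_m$ has objective $\tfrac12\bigl(\sum_{i=1}^m L_i^2+\sum_{j=1}^n p_j^2\bigr)$. As $\sum_j p_j^2$ is a constant and $\sum_i L_i = mB$ is fixed, strict convexity of $x\mapsto x^2$ implies this is minimized precisely when $L_1=\dots=L_m=B$, i.e.\ exactly when the items fit into $k=m$ bins of capacity $B$. Setting the target value to $\tfrac12\bigl(mB^2+\sum_j p_j^2\bigr)$ (a number polynomial in the input, with all $p_j,w_j$ unary) thus yields a yes-instance if and only if the original \textsc{Unary Bin Packing} instance is, and the number of machines is again the parameter.

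I expect no substantial obstacle beyond the routine verification of the two facts used in the second reduction: the order-independence of $\sum w_j C_j$ on a single machine when $w_j=p_j$ (immediate from the quoted identity), and the fact that the integer optimum of $\sum_i L_i^2$ under a fixed sum is attained exactly at equal loads (immediate from the equality case of the quadratic-mean/arithmetic-mean inequality, and compatible with integrality since $m$ equal integers summing to $mB$ must each equal $B$). The one subtlety to keep explicit is ruling out schedules with idle time or with jobs not executed consecutively, which is handled by the standard exchange argument using $w_j\ge 0$.
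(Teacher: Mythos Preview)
Your proposal is correct and follows essentially the same approach as the paper: both reduce from tight \textsc{Unary Bin Packing}, set $p_j=w_j=a_j$, observe that per-machine cost equals $\tfrac12(L_i^2+\sum_{j\in J^i}p_j^2)$ because all ratios $w_j/p_j$ coincide, and then use convexity of $x\mapsto x^2$ under the fixed-sum constraint $\sum_i L_i=mB$ to conclude that the optimum is attained exactly when every load equals $B$. Your write-up is in fact slightly more explicit than the paper's (you spell out the algebraic identity and address idle time), but there is no substantive difference.
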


\subsection{Related work}
We give a brief summary of known results related to scheduling and parameterized complexity. Many more results can be found in surveys on this topic (e.g.~\cite{LawlerLKS:93}). Note that these surveys focus on \NP-hardness results and polynomial and approximation algorithms. There were several attempts to introduce scheduling problems to the \FPT community. It started with the pioneering work of Bodlaender and Fellows~\cite{BodlaenderF95} for the precedence constrained scheduling and continued (after nearly 10 years) with the first \FPT algorithm of Fellows and McCartin~\cite{FellowsM03}. A recent result of van Bevern et al.~\cite{BevernBBKTW16} resolves a question of Mnich and Wiese by showing that makespan minimization with precedence constraints $P | prec | C_{max}$ is \W{2}-hard. Marx~\cite{Marx:09Dagstuhl,Marx:11Dagstuhl} also highlighted the importance of scheduling in the parameterized setting.

Many other settings are now popular in the scheduling community. These include {\em Open Shop}, where we are given for each job $J$ a bunch of tasks $T^J_1, T^J_2,\ldots, T^J_k$ with associated machines $M^J_1, M^J_2,\ldots, M^J_k$ to be completed in the natural order (that is $T^J_i$ has to finish before $T^J_{i+1}$ starts to be processed) -- see related work of van Bevern and Pyatkin~\cite{BevernP16} and Kononov et al.~\cite{KononovSS12}. In {\em Two agent scheduling} there are two agents competing for one machine to schedule their respective jobs within each agents budget. Hermelin et al.~\cite{HermelinKSTW15} showed that if one agent has only $k$ jobs and the other agent has jobs with unit processing times then the problem admits an \FPT algorithm with parameter $k$. Halldórsson and Karlsson~\cite{HalldorssonK06}, followed by work of van Bevern et al.~\cite{DBLP:journals/scheduling/BevernMNW15,BevernNS15} investigate {\em interval scheduling} (or \textit{job interval selection}) in which each job has several time slots in which it may be processed and the task is to schedule as many jobs as possible.

We now turn our attention towards more classical models of scheduling. A summary of what follows can be found in Table~\ref{tab:ComplexityOverview}.

\paragraph{Makespan and identical machines -- $P || C_{max}$}
Most importantly, Mnich and Wiese~\cite{MnichW:14} show that there is an \FPT algorithm for this problem when parameterized by $p_{max}$. A remarkable result of Jansen et al.~\cite{JansenKMS:13} shows that the \textsc{Unary Bin Packing} problem is \W{1}-hard when parameterized by the number of bins. This immediately implies the \W{1}-hardness of $P || C_{max}$ parameterized by $m$ even when $p_{max}$ is given in unary, and with some effort also implies the hardness of $P || \sum w_j C_j$ with parameter $m$ when $p_{max}$ and $w_{max}$ are given in unary, as we show in Section~\ref{sec:hardness}.

\paragraph{Makespan and unrelated machines -- $R || C_{max}$}
Asahiro et al.~\cite{AsahiroJMOZ:07} show that the problem $R || C_{max}$ is strongly \NP-hard already for \textit{restricted assignment} when there is a number $p_j$ for each job such that for each machine $p_j^i \in \{p_j, \infty\}$ and all $p_j \in \{1,2\}$ and for every job there are exactly two machines where it can run. Mnich and Wiese~\cite{MnichW:14} proved that the problem is in \FPT with parameters $\theta$ and $m$.
 
\paragraph{Sum of weighted completion times and unrelated machines -- $R || \sum w_j C_j$}

Surprisingly, in the unweighted case, $R || \sum C_j$ turns out to be solvable in polynomial time~\cite{Bruno:74,Horn:73}. Preemption ($R | pmtn | \sum C_j$) makes the problem strongly \NP-hard~\cite{Sitters:05}. The weighted case $R || \sum w_j C_j$ is strongly \NP-hard~\cite[Problem SS13]{GareyJ:79}.

\begin{table*}
\def\arraystretch{1.1}
\centering
\begin{tabular}{| c | cccc | c |}
\hline
Problem & $n$ & $m$ & $p_{max}$ & $\theta$ & Complexity \\
\hline\hline
\multirow{3}{*}{$P || C_{max}$} &
unary & unary & unary & --- & \NP-hard \cite{JansenKMS:13} \\
& unary & binary & param. & --- & \FPT~\cite{MnichW:14}, $2^{2^{\Oh(p_{max}^2 \log p_{max})}}$ \\
& unary & param. & unary & --- & \W{1}-hard~\cite{JansenKMS:13} \\
\hline
$Q || C_{max}$ & binary & unary & param. & --- & \FPT [*], $2^{\Oh(p_{max}^2 \log p_{max})}$ \\
\hline
\multirow{4}{*}{$R || C_{max}$} & binary & param. & --- & param. & \FPT~\cite{MnichW:14} \\
& unary & unary & --- & constant & \NP-hard~\cite{AsahiroJMOZ:07} \\
& binary & constant & --- & binary & \NP-hard \\
& binary & unary & --- & param. & \FPT with $K$ [*] \\
\hline
\multirow{2}{*}{$R || \sum w_jC_j$}
& binary & unary & param. & --- & \FPT with $w_{max}, K$ [*] \\
& binary & param. & --- & param. & \FPT [*] \\
\hline
\end{tabular}
 \caption{A summary of the complexity results we mention. The results contained in this paper are marked with [*].}\label{tab:ComplexityOverview}
\end{table*}

\section{Preliminaries}

For a positive integer $n$, we denote $[n] = \{1, 2, \dots n\}$ and $\langle n \rangle = \lceil \log_2(n) \rceil$ the length of binary encoding of $n$. We write vectors in bold such as $\veb = (b_1, \dots, b_n)$. By $\langle \veb \rangle$ we denote the length of encoding of $\veb$ which is $\sum_{i=1}^n \langle b_i \rangle$; similarly for matrices. Let $\vex = (x_1, \dots, x_n)$. A function $f: \R^n \rightarrow \R$ is \textit{separable convex} if it can be written as $f(\vex) = \sum_{i=1}^n f_i(x_i)$ such that $f_i$ is univariate convex for all $1 \leq i \leq n$. Given a function $f$, a \textit{comparison oracle}, queried on two vectors $\vex,\vey$, asserts whether or not  $f(\vex)\leq f(\vey)$. Time complexity measures the number of arithmetic operations and oracle queries.

\subsection{$N$-fold Integer Programming}
\label{sec:nfoldip}

Recall the definition of the $n$-fold IP problem~(\ref{NIP}). Observe that in an $n$-fold IP, every equality $\vea \vex = b$ of $A^{(n)}\vex = \veb$ is of one of two kinds. Either $\vea = (\vealpha, \vealpha, \dots, \vealpha)$ corresponds to a certain row $\vealpha$ of $A_1$ repeated $n$ times, meaning that $a_i^j = \alpha_i$ for all $1 \leq j \leq n$; we call this equality \textit{globally uniform}. Or, $\vea = (\ve0, \dots, \ve0, \vealpha, \ve0, \dots, \ve0)$, corresponding to $kt$ zeros, a certain row $\vealpha$ of $A_2$, and $(n-k-1)t$ zeros, such that there also must exist $n-1$ other equalities of this form which have the same coefficients $\vealpha$ on the remaining $n-1$ bricks and zeros elsewhere. We call this kind of constraint \textit{locally uniform}. Thus, given an IP in dimension $nt$, we can prove that it is an $n$-fold IP by showing that every equality is either globally uniform or locally uniform.

Let $L:=\langle a,\veb,\vel,\veu,\vew \rangle$ be the length of the input. The first key result we use is the following:

\begin{theorem}{(\cite[Theorem 6.1]{HemmeckeOR:13})}\label{thm:nfoldip}
For any fixed $r,s$ and $t$, there is an algorithm that,
given $n$, $a$, $(r,s)\times t$, matrices $A_1$ and $A_2$ of appropriate dimensions with all entries
bounded by $a$ in absolute value, $\veb,\vel,\veu$, and $\vew$, solves problem \eqref{NIP} in time $\Oh(a^{3t(rs+st+r+s)}n^3L)$.

\end{theorem}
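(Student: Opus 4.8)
The plan is to prove the theorem through the theory of Graver bases together with an augmentation (iterative improvement) framework, following Hemmecke, Onn and Romanchuk. Recall that the Graver basis $\mathcal{G}(A)$ of an integer matrix $A$ is the set of conformal-minimal nonzero elements of the lattice kernel $\ker_\Z(A)=\{\vez:A\vez=\ve0\}$, where $\vex\sqsubseteq\vey$ means $x_iy_i\ge 0$ and $|x_i|\le|y_i|$ for every coordinate $i$. Two properties drive the whole argument: (i) every $\vez\in\ker_\Z(A)$ decomposes as a sign-compatible (conformal) sum $\vez=\sum_i\lambda_i\veg_i$ with $\veg_i\in\mathcal{G}(A)$, $\lambda_i\in\Z_{>0}$ and each $\veg_i\sqsubseteq\vez$; and (ii) as a consequence, whenever a feasible point $\vex$ of \eqref{NIP} is not optimal there exist a single $\veg\in\mathcal{G}(\A)$ and a positive integer step length $\lambda$ such that $\vex+\lambda\veg$ stays within the bounds $\vel\le\vex+\lambda\veg\le\veu$ and strictly improves $\vew\vex$ (a \emph{Graver-best step}).

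The heart of the proof, and what I expect to be the main obstacle, is the structural claim that $\mathcal{G}(\A)$ contains only vectors whose number of nonzero bricks is bounded by a constant $g=g(r,s,t,a)$ independent of $n$. The point is that this decouples the complexity of the Graver basis from $n$. To prove it, observe that any $\vez\in\ker_\Z(\A)$ has each brick $z^{(j)}$ lying in $\ker_\Z(A_2)$ (from the locally uniform constraints) while the bricks jointly satisfy $\sum_j A_1z^{(j)}=\ve0$ (from the globally uniform constraints). Writing $G$ for the matrix whose columns are the finitely many elements of $\mathcal{G}(A_2)$, I would decompose each brick conformally over these columns and record the resulting multiplicities; the globally uniform constraint then transports the question to the kernel of the auxiliary matrix $A_1G$. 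Defining the Graver complexity $g(A_1,A_2)$ as the maximum $\ell_1$-norm of an element of $\mathcal{G}(A_1G)$, a conformal-minimality argument shows that a Graver element of $\A$ with more than $g$ nonzero bricks admits a proper conformal sub-sum already lying in $\ker_\Z(\A)$, contradicting minimality. Since in addition each individual brick is a conformal combination of the fixed set $\mathcal{G}(A_2)$ with bounded coefficients, this simultaneously bounds the $\ell_\infty$-norm of every element of $\mathcal{G}(\A)$ by a function of $r,s,t,a$ alone.

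With the bricks-bound established, the optimization algorithm is comparatively routine. I would first obtain an initial feasible solution by applying the same augmentation machinery to an auxiliary feasibility $n$-fold IP whose constraint matrix has the same block structure, and then repeatedly compute a Graver-best step and apply it. Using property~(i) to decompose the direction $\vex^\ast-\vex$ toward an optimum $\vex^\ast$ into conformal Graver pieces, at least one such piece yields a proportional improvement; a geometric-decrease argument, exploiting integrality of $\vew$ and the encoding length $L$, then bounds the number of augmentation steps by a polynomial in $n$ and $L$.

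Finally I would show that a single Graver-best step is computable within the stated budget. Because every candidate direction has at most $g$ nonzero bricks and bounded brick entries, the best step can be found by a dynamic program that scans the $n$ bricks from left to right while maintaining the running globally uniform partial sum $A_1\sum_j x^{(j)}$. The entries of this $r$-dimensional running sum remain confined to a range of size $a^{\Oh(t(rs+st+r+s))}$, so the state space, and hence the per-iteration cost, is $a^{\Oh(t(rs+st+r+s))}\,n$ with an exponent independent of $n$. Multiplying the per-iteration cost by the polynomial bound on the number of iterations yields the claimed running time $\Oh(a^{3t(rs+st+r+s)}n^3L)$.
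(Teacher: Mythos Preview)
The paper does not prove this theorem at all: it is quoted verbatim as \cite[Theorem~6.1]{HemmeckeOR:13} and used as a black box, with no accompanying argument. So there is no ``paper's own proof'' to compare your proposal against.

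That said, what you have written is a faithful high-level summary of the Hemmecke--Onn--Romanchuk proof itself: the Graver-basis augmentation framework, the key structural lemma bounding the number of nonzero bricks of any element of $\mathcal{G}(\A)$ via the Graver complexity $g(A_1,A_2)$ of the bimatrix, the auxiliary $n$-fold IP for an initial feasible point, the geometric-decrease bound on the number of Graver-best augmentations, and the per-step dynamic program over bricks. If your goal were to reprove the cited result, this outline is on the right track; the only place where you are a bit loose is the accounting for the dynamic program's state space and how exactly the exponent $3t(rs+st+r+s)$ arises, which in the original paper comes from multiplying explicit bounds on $|\mathcal{G}(A_2)|$, on the entries of $A_1G$, and on the Graver complexity, rather than from a single range estimate on the running partial sum. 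For the purposes of the present paper, however, no proof is required.
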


In other words, if $r,s,t$ and $a$ are parameters and $\A, \veb, \vel, \veu, \vew$ are input, the $n$-Fold IP problem can be solved in \FPT time.

\subsection{Separable convex minimization}

We also need a result regarding minimization of separable convex function. Given a separable convex function $f$, we consider integer programs of the form:

\begin{equation}\label{SIP}
\min\left\{f(\vex)\,\colon \A\vex=\veb\,,\
\vel\leq\vex\leq\veu\,,\ \vex\in\Z^{nt}\right\}.
\end{equation}

Hemmecke, Onn and Romanchuk prove that:

\begin{theorem}{(\cite[Theorem 4.1]{HemmeckeOR:13})}\label{thm:nfoldip_convex_step}
For matrices $A_1$ and $A_2$ of appropriate dimensions, there is an algorithm that, given $n$,
$\veb,\vel,\veu$, separable convex function $f$ presented by a comparison oracle,
and a feasible point $\vex$ in the program (\ref{SIP}), either asserts
that $\vex$ is optimal or finds an augmenting step $\veg$
for $\vex$ which satisfies $f(\vex+\veg) < f(\vex)$ in linear time $\Oh(n)$.
\end{theorem}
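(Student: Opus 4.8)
The plan is to certify optimality, or exhibit an improving step, by searching the Graver basis of $\A$ with a dynamic program over the $n$ bricks. Recall that the Graver basis $\G(\A)$ is the set of $\sqsubseteq$-minimal nonzero elements of $\ker_{\Z}(\A)=\{\vez\in\Z^{nt}:\A\vez=\ve0\}$, where $\vey\sqsubseteq\vez$ means $y_iz_i\ge 0$ and $|y_i|\le|z_i|$ for all $i$. By the classical optimality criterion for separable convex integer minimization (Murota--Saito--Weismantel; see also~\cite{Onn:10}), a feasible $\vex$ is optimal for~\eqref{SIP} if and only if no $\veg\in\G(\A)$ satisfies $\vel\le\vex+\veg\le\veu$ and $f(\vex+\veg)<f(\vex)$. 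I would use the nontrivial direction: any $\veh\in\ker_{\Z}(\A)$ with $\vex+\veh$ feasible decomposes into a sign-compatible sum $\veh=\sum_k\veg_k$ of Graver basis elements with $\veg_k\sqsubseteq\veh$; each $\vex+\veg_k$ is then feasible (it lies coordinatewise between $\vex$ and $\vex+\veh$), and the superadditivity of separable convex functions along sign-compatible sums gives $f(\vex+\veh)-f(\vex)\ge\sum_k\bigl(f(\vex+\veg_k)-f(\vex)\bigr)$, so an improving feasible direction forces an improving feasible element of $\G(\A)$. Hence it suffices to search $\G(\A)$ for a feasible improving element, or to certify that there is none.

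The key structural fact --- and the step I expect to be the main obstacle --- is that the elements of $\G(\A)$ have \emph{bounded type}, uniformly in $n$: there is a computable number $\gamma$, depending only on $A_1$ and $A_2$, such that every $\veg=(g^1,\dots,g^n)\in\G(\A)$ has at most $\gamma$ nonzero bricks and $\|g^j\|_\infty\le\gamma$ for all $j$. The bound on the number of nonzero bricks is the finiteness of the \emph{Graver complexity} of $(A_1,A_2)$, which is proved by a Dickson-type finiteness argument on higher Lawrence liftings; the entrywise bound then follows because the restriction of $\veg$ to its (at most $\gamma$) nonzero bricks is a Graver basis element of the $k$-fold product $A^{(k)}$ for some $k\le\gamma$ --- a \emph{fixed} integer matrix of bounded dimension with all entries of absolute value at most $a$ --- and Graver basis elements of a fixed matrix have $\ell_\infty$-norm bounded by a computable function of that matrix. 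Establishing this rigorously is the technical heart of~\cite{HemmeckeOR:13}; everything below is routine by comparison.

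Granting bounded type, I would run the following dynamic program. Let $B:=\{h\in\Z^t:\|h\|_\infty\le\gamma,\ A_2h=\ve0\}$; since $A_1,A_2$ are fixed, $|B|\le(2\gamma+1)^t$ is a constant and $\|A_1h\|_\infty\le ta\gamma=:D$ for $h\in B$. An element $\veg\in\ker_{\Z}(\A)$ whose bricks all lie in $B$ is exactly a choice $g^1,\dots,g^n\in B$ with $\sum_{j=1}^nA_1g^j=\ve0$, and if it has at most $\gamma$ nonzero bricks then each partial sum $\ves^j:=\sum_{i\le j}A_1g^i$ lies in the box $P:=\{-\gamma D,\dots,\gamma D\}^r$. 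Write $f(\vey)=\sum_{j=1}^nf^{(j)}(y^j)$ with $f^{(j)}$ the separable convex part on brick $j$, and let $\vel^i,\veu^i$ be the $i$-th bricks of $\vel,\veu$. For $0\le j\le n$ and $\ves\in P$, define $d[j][\ves]$ to be the minimum of $\sum_{i=1}^{j}\bigl(f^{(i)}(x^i+g^i)-f^{(i)}(x^i)\bigr)$ over all $g^1,\dots,g^j\in B$ with $\vel^i\le x^i+g^i\le\veu^i$ for $i\le j$ and $\sum_{i\le j}A_1g^i=\ves$ (and $+\infty$ if there is no such choice). This satisfies
\[
d[j][\ves]=\min_{h\in B}\Bigl(d[j-1][\ves-A_1h]+f^{(j)}(x^j+h)-f^{(j)}(x^j)\Bigr),
\]
the minimum being over $h$ with $\ves-A_1h\in P$ and $\vel^j\le x^j+h\le\veu^j$. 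The all-zero choice shows $d[n][\ve0]\le 0$ always. If $d[n][\ve0]<0$, backtracking produces a feasible $\veg$ with $f(\vex+\veg)<f(\vex)$; if $d[n][\ve0]=0$, then $\vex$ is optimal, since an improving feasible direction would, by the first paragraph, yield an improving feasible $\veg^*\in\G(\A)$, which by bounded type has all bricks in $B$ and all partial sums in $P$ and hence corresponds to a path of negative weight --- a contradiction. The table has $(n+1)|P|=\Oh(n)$ entries, each update inspects $|B|=\Oh(1)$ bricks with $\Oh(1)$ oracle comparisons, and a comparison of two partial objective values at state $(j,\ves)$ is carried out by the comparison oracle applied to the two candidate vectors padded with $x^{j+1},\dots,x^n$ on the untouched bricks (faithful by separability of $f$); since the padded vectors differ from $\vex$ in only $\Oh(1)$ coordinates they can be maintained incrementally, so the whole computation runs in $\Oh(n)$ arithmetic operations and oracle queries once $A_1,A_2$ are fixed. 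If a Graver-\emph{best} step is desired, one additionally records the largest feasible scaling along each path, at no asymptotic cost.
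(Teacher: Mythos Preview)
The paper does not prove this statement at all: it is quoted as \cite[Theorem~4.1]{HemmeckeOR:13} and used as a black box, so there is no ``paper's own proof'' to compare against. That said, your sketch faithfully reconstructs the argument of Hemmecke, Onn and Romanchuk --- the Graver-basis optimality criterion for separable convex objectives, the bounded Graver complexity of the bimatrix $(A_1,A_2)$ giving a uniform bound on the type of every $\veg\in\G(\A)$, and the dynamic program over bricks with state the partial $A_1$-sum --- and is essentially correct.

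One small gap: your claim that ``the padded vectors differ from $\vex$ in only $\Oh(1)$ coordinates'' relies on each DP path using at most $\gamma$ nonzero bricks, but your DP as written does not enforce this --- it allows any sequence of bricks in $B$ whose partial sums happen to stay in $P$, and such a path may have $\Theta(n)$ nonzero bricks. For \emph{correctness} this is harmless (every Graver element is still captured, and any improving path yields a legitimate augmenting step), but for the $\Oh(n)$ running-time accounting with a comparison oracle you need the representative vectors to have $\Oh(1)$ support so that they can be stored and compared in constant time. The standard fix, which HOR use, is to add a counter for the number of nonzero bricks (bounded by $\gamma$) to the DP state; this multiplies the state space by a constant and restores the claim.
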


In particular, if we are able to find an initial solution $\hat{\vex}$ and guarantee that the optimum $\vex^*$ is near, that is, $||\hat{\vex} - \vex^*||_\infty \leq N$ for some $N \in \N$, applying Theorem~\ref{thm:nfoldip_convex_step} at most $nN$ times will reach the optimum.

Under suitable assumptions on the function $f$, the continuous optimum $\hat{\vex}$ of Problem~\ref{SIP} can be found in polynomial time using the ellipsoid method or an interior point method. The key insight of Hemmecke, Köppe and Weismantel~\cite{HemmeckeKW14} is adapting a proximity technique of Hochbaum and Shantikumar~\cite{HochbaumS:90} for the context of Graver bases:

\begin{theorem}{(\cite[Theorem 3.14]{HemmeckeKW14})}
Let $\hat{\vex}$ be an optimal solution of the continuous relaxation
of \eqref{SIP},

$$ \min\left\{f(\vex)\,\colon \A\vex=\veb\,,\
\vel\leq\vex\leq\veu\,,\ \vex\in\R^{nt}\right\} \enspace . $$

Then there exists an optimal solution $\vex^*$ of the integer optimization problem \eqref{SIP} with

$$ ||\hat{\vex} - \vex^*||_\infty \leq n \cdot \max \{||v||_\infty \mid v \in \mathcal{G}(A)\} \enspace .$$ 
\end{theorem}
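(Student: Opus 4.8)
The plan is to prove this as a Graver-basis proximity bound, in the spirit of the Hochbaum--Shantikumar~\cite{HochbaumS:90} technique adapted to Graver bases. Among all optimal solutions of the integer program~\eqref{SIP} I would pick the one, call it $\vex^*$, minimizing $\|\hat{\vex}-\vex^*\|_1$, set $\vez:=\vex^*-\hat{\vex}$, and aim to show that $\vez$ cannot be long. Since $\hat{\vex}$ is feasible for the relaxation and $\vex^*$ for~\eqref{SIP}, we have $A\vez=\ve0$, so $\vez$ lies in the rational polyhedral cone $C:=\{\vey\colon A\vey=\ve0 \text{ and } y_j z_j\ge 0 \text{ for all } j\}$ of kernel vectors sign-compatible with $\vez$. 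The first step is to decompose $\vez$ conformally as a nonnegative combination $\vez=\sum_{i=1}^{k}\lambda_i\veg_i$ with $\lambda_i>0$ and each $\veg_i\in\mathcal{G}(A)$ sign-compatible with $\vez$, using $k\le\dim C\le nt$ terms. I expect this to be the only genuinely non-routine point: one must observe that every extreme ray of $C$, scaled to a primitive integer vector $\veg$, is $\sqsubseteq$-minimal among nonzero elements of $\ker A$ and hence lies in $\mathcal{G}(A)$ --- if $\veg$ had a strictly smaller nonzero conformal part $\veh\in\ker A$, then $\veh$ would be non-parallel to $\veg$ (a parallel integral $\veh$ with $|h_j|\le|g_j|$ is impossible for primitive $\veg$), so $\veg=\veh+(\veg-\veh)$ would contradict extremality of the ray. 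Carathéodory's theorem applied inside $C$ then supplies the decomposition; note that $\vez$ is real, so the $\lambda_i$ need not be integral.

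Next I would run the exchange argument. Writing $\lambda_i=\lfloor\lambda_i\rfloor+\{\lambda_i\}$, set $\vew:=\sum_i\lfloor\lambda_i\rfloor\veg_i$ (integral and in $\ker A$) and $\ver:=\vez-\vew=\sum_i\{\lambda_i\}\veg_i$. Both $\vew$ and $\ver$ are sign-compatible with $\vez$ and satisfy $|w_j|\le|z_j|$ and $|r_j|\le|z_j|$ for each $j$, being partial sums of the same-signed terms comprising $z_j=\sum_i\lambda_i(\veg_i)_j$. Hence $\hat{\vex}+\vew$ and $\vex^*-\vew=\hat{\vex}+\ver$ are both feasible: they satisfy $A\vex=\veb$ since $\vew\in\ker A$, and coordinatewise they lie between $\hat{\vex}$ and $\vex^*$ (both of which are in the box $[\vel,\veu]$), hence in $[\vel,\veu]$; moreover $\vex^*-\vew$ is integral. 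Because $f$ is separable convex and, in each coordinate, $\hat x_j+w_j$ and $\hat x_j+r_j$ lie between $\hat x_j$ and $x^*_j$ with $(\hat x_j+w_j)+(\hat x_j+r_j)=\hat x_j+x^*_j$, the elementary inequality $f_j(\hat x_j+w_j)+f_j(\hat x_j+r_j)\le f_j(\hat x_j)+f_j(x^*_j)$ holds; summing gives $f(\hat{\vex}+\vew)+f(\vex^*-\vew)\le f(\hat{\vex})+f(\vex^*)$. Optimality of $\hat{\vex}$ for the relaxation forces $f(\hat{\vex}+\vew)\ge f(\hat{\vex})$, and optimality of $\vex^*$ among integer points forces $f(\vex^*-\vew)\ge f(\vex^*)$; so all inequalities are equalities and $\vex^*-\vew$ is again an integer optimum, at $\ell_1$-distance $\|\ver\|_1=\|\vez\|_1-\|\vew\|_1$ from $\hat{\vex}$. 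If $\vew\ne\ve0$ this is strictly smaller than $\|\vez\|_1=\|\hat{\vex}-\vex^*\|_1$, contradicting the minimal choice of $\vex^*$; therefore $\vew=\ve0$, i.e.\ $0<\lambda_i<1$ for every $i$.

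The bound is then immediate: for each coordinate $j$, sign-compatibility of the $\veg_i$ with $\vez$ gives $|z_j|=\sum_i\lambda_i|(\veg_i)_j|\le\sum_i|(\veg_i)_j|\le k\cdot\max_{\veg\in\mathcal{G}(A)}\|\veg\|_\infty$, so $\|\hat{\vex}-\vex^*\|_\infty=\|\vez\|_\infty\le nt\cdot\max_{\veg\in\mathcal{G}(A)}\|\veg\|_\infty$. This is the asserted inequality, the coefficient $nt$ being the number of variables of~\eqref{SIP}; the displayed statement writes $n$ because~\cite{HemmeckeKW14} phrases it for a generic integer program in $n$ variables, and the precise constant is immaterial for the applications here. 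To summarize, the main obstacle is the first step --- handling conformal decompositions of \emph{real} kernel vectors into \emph{integral} Graver basis elements and bounding the number of summands --- after which the exchange inequality and the minimality argument are routine.
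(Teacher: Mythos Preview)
The paper does not supply a proof of this statement at all: it is quoted verbatim as \cite[Theorem~3.14]{HemmeckeKW14} and used as a black box to derive Theorem~\ref{thm:nfoldip_convex}. So there is nothing in the paper to compare your argument against.

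That said, your proposal is a correct reconstruction of the standard Graver-basis proximity proof, which is precisely the approach of Hemmecke, K\"oppe and Weismantel. The three ingredients --- conformal Carath\'eodory decomposition of the real kernel vector $\vez=\vex^*-\hat{\vex}$ into at most $\dim(\ker A)$ Graver-basis directions, the separable-convex exchange inequality $f(\hat{\vex}+\vew)+f(\hat{\vex}+\ver)\le f(\hat{\vex})+f(\vex^*)$, and the minimality-of-$\|\vez\|_1$ contradiction forcing all $\lambda_i<1$ --- are exactly the ones used in~\cite{HemmeckeKW14}. Your justification that primitive generators of the extreme rays of the sign-compatible cone lie in $\mathcal{G}(A)$ is fine; the only cosmetic point is that the parallel case is ruled out because $\veh$ is \emph{integral} (from the definition of $\mathcal{G}(A)$) and $\veg$ primitive, which you state but could make slightly more explicit. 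Your observation that the ``$n$'' in the displayed bound is the ambient number of variables (here $nt$) rather than the $n$-fold parameter $n$ is also correct and worth recording, since the paper's notation overloads $n$.
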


Here, $\mathcal{G}(A)$ is the Graver basis of the bimatrix $A$, and the quantity $\max \{||v||_\infty \mid v \in \mathcal{G}(A)\}$ is bounded by a $t \cdot a^{r+s}$ (\cite[Lemma 3.20]{Onn:10}). Hence, $n \cdot ||\hat{\vex} - \vex^*||_\infty \leq n^2 \cdot t \cdot a^{r+s}$ applications of Theorem~\ref{thm:nfoldip_convex_step} (with $\vex = \lfloor{\hat{\vex}}\rfloor$) reach the integer optimum:

\begin{theorem}\label{thm:nfoldip_convex}
For any fixed $r,s$ and $t$, there is an algorithm that,
given $n$, $a$, $(r,s)\times t$, matrices $A_1$ and $A_2$ of appropriate dimensions with all entries
bounded by $a$ in absolute value, $\veb,\vel,\veu$, and $\vew$, solves problem~\eqref{SIP} in
time $\Oh(a^{3t(rs+st+r+s)}n^3L)$.
\end{theorem}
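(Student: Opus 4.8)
The plan is to follow the recipe sketched just before the statement, in three phases: solve the continuous relaxation, use the proximity bound both to localise the search and to obtain a feasible integer starting point, and then run the augmentation oracle of Theorem~\ref{thm:nfoldip_convex_step} a bounded number of times. For the first phase I would compute an optimal solution $\hat{\vex} \in \R^{nt}$ of $\min\{f(\vex) \colon \A\vex = \veb,\ \vel \le \vex \le \veu,\ \vex \in \R^{nt}\}$. Under the mild assumption that $f$ is presented so that function values and a subgradient can be evaluated in time $\poly(L)$ — which is the case for every separable convex objective occurring in our applications — this continuous program is solvable in time $\poly(L)$ by the ellipsoid method or an interior point method, with $\hat{\vex}$ of encoding length $\poly(L)$.

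Next I would localise the search. By the proximity theorem of Hemmecke, Köppe and Weismantel (\cite[Theorem~3.14]{HemmeckeKW14}, recalled above) together with the bound $\max\{\|v\|_\infty \colon v \in \mathcal{G}(A)\} \le t\cdot a^{r+s}$ (\cite[Lemma~3.20]{Onn:10}), there is an integer optimum $\vex^*$ of \eqref{SIP} with $\|\hat{\vex} - \vex^*\|_\infty \le N := n\cdot t\cdot a^{r+s} \in \N$. I would therefore tighten the bounds, componentwise, to $\vel' := \max\{\vel,\ \lceil\hat{\vex}\rceil - N\mathbf{1}\}$ and $\veu' := \min\{\veu,\ \lfloor\hat{\vex}\rfloor + N\mathbf{1}\}$. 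The resulting program is again an $n$-fold IP with the same matrix $\A$ (hence the same $r,s,t$ and $a$), it still contains the optimum $\vex^*$ (so its optimal value equals that of \eqref{SIP}), and every one of its feasible points lies within $\ell_\infty$-distance $2N$ of $\vex^*$. A feasible integer starting point $\vex_0$ of this restricted program is then produced by a single call to Theorem~\ref{thm:nfoldip} with the trivial objective $\vew = \ve 0$ (it is feasible since $\vex^*$ is); alternatively one may take $\vex_0 = \lfloor\hat{\vex}\rfloor$ whenever the constraint structure makes it feasible.

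Finally I would iterate the augmentation oracle starting from $\vex_0$: each call to Theorem~\ref{thm:nfoldip_convex_step} either certifies that the current iterate is optimal for the restricted — hence, by the previous paragraph, for the original — program, or returns $\veg$ with $f(\vex_k + \veg) < f(\vex_k)$, which I apply to get $\vex_{k+1}$. Since all iterates stay in the box of $\ell_\infty$-radius $2N$ around $\vex^*$, the observation following Theorem~\ref{thm:nfoldip_convex_step} bounds the number of iterations by $\Oh(n\cdot N) = \Oh(n^2 t a^{r+s})$, each call costing $\Oh(n)$ time for fixed $r,s,t,a$. Tallying the costs: the continuous solve is $\poly(L)$, the single call to Theorem~\ref{thm:nfoldip} is $\Oh(a^{3t(rs+st+r+s)}n^3L)$, and the augmentation phase is $\Oh(n^2 t a^{r+s})\cdot\Oh(n) = \Oh(n^3 t a^{r+s})$; since $a^{r+s} \le a^{3t(rs+st+r+s)}$, the term $\Oh(a^{3t(rs+st+r+s)}n^3L)$ dominates, which is the claimed bound.

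I expect two points to be delicate. First, polynomial-time solvability of the continuous relaxation needs an assumption on how $f$ is presented that is slightly stronger than the bare comparison oracle of Theorem~\ref{thm:nfoldip_convex_step} — first-order access is enough and holds in all our uses — so this is a mild but necessary ingredient. Second, and more substantially, converting the $\ell_\infty$ proximity bound into the $\Oh(nN)$ iteration bound of the last phase relies on the step returned by Theorem~\ref{thm:nfoldip_convex_step} being (essentially) a Graver-best step, together with the superadditivity of separable convex functions along conformal decompositions of $\vex^* - \vex_k$; this is where the Graver-basis machinery really enters and is the heart of the argument.
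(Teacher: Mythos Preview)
Your proposal is correct and follows the same three-phase route as the paper: solve the continuous relaxation, invoke the proximity bound of \cite[Theorem~3.14]{HemmeckeKW14} together with the Graver norm bound $t\cdot a^{r+s}$, and then apply at most $\Oh(n^2 t a^{r+s})$ augmentation steps from Theorem~\ref{thm:nfoldip_convex_step}. You are in fact more careful than the paper's own sketch, which simply starts the augmentation at $\lfloor\hat{\vex}\rfloor$ without addressing feasibility; your detour through a feasibility call to Theorem~\ref{thm:nfoldip} on the tightened box, and your two flagged caveats (first-order access to $f$, and the Graver-best nature of the step underlying the $nN$ iteration bound), are sound and worth keeping.
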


\subsection{Convex minimization in fixed dimension}

To prove Theorem~\ref{thm:rwicim} we formulate an integer linear program in fixed dimension and minimize a convex function over it:

\begin{theorem}{(Khachiyan and Porkolab~\cite{KhachiyanP:00})}\label{thm:semialgebraic_ip_fpt}
  Minimizing a quasiconvex function over a semialgebraic convex set
  defined by $k$ polynomials in dimension $p$ is \FPT with respect to
  $k$ and $p$.
\end{theorem}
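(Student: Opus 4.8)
The plan is to generalize Lenstra's fixed-dimension integer-programming algorithm from rational polyhedra to convex semialgebraic sets, replacing every polyhedral subroutine by a call to effective real algebraic geometry. Throughout, the dimension $p$ and the number $k$ of defining polynomials (together with a bound $d$ on their degrees) are treated as parameters, while the bit-size $L$ of the input plays the role of the ``$n$'' in the \FPT running time. The nontrivial content is integer optimization: we seek the point of $\Z^p$ inside the set that minimizes $f$.

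First I would reduce optimization to feasibility. Since $f$ is quasiconvex, each sublevel set $S_\lambda := \{\vex : f(\vex) \le \lambda\}$ is convex, and its intersection with the given set $S$ is again a convex semialgebraic set described by $k+1$ polynomials. Hence it suffices to (i) bound the range of optimal objective values and the bit-size of an optimal integer point by a function of the parameters times $\poly(L)$ --- which follows from effective root bounds for systems of polynomial inequalities --- and then (ii) binary search over $\lambda$, at each step deciding whether $S \cap S_\lambda$ contains an integer point. This isolates the core subproblem: given a convex semialgebraic set $K \subseteq \R^p$, decide whether $K \cap \Z^p \neq \emptyset$ and, if so, produce a witness.

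For the core subproblem I would run a ``flat-or-round'' recursion. Using a critical-point / ellipsoid method over the reals I compute a pair of concentric ellipsoids $E \subseteq K \subseteq cE$ with $c = c(p)$, i.e. an approximation of the L\"owner--John ellipsoid; the optima $\max\{\vecc\vex : \vex \in K\}$ that this requires are instances of optimizing a linear functional over a semialgebraic set, solved by quantifier elimination. By the flatness theorem (Khinchine), either $E$ is large enough that $K$ contains a unit cell of a suitably adapted lattice basis --- in which case $K \cap \Z^p \ne \emptyset$ and a witness is found by rounding the ellipsoid centre --- or there is an integer direction $\vecc$ in which the lattice width of $K$ is at most a constant $\omega(p)$. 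In the latter case only $\omega(p)+1$ integer hyperplanes $\vecc\vex = \gamma$ meet $K$, and I recurse on each slice, which is a convex semialgebraic set in dimension $p-1$ defined by the same number of polynomials. The recursion has depth $p$ and branching at most $\omega(p)+1$, so the search tree has size bounded by a function of $p$ alone.

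The main obstacle is carrying out the geometric primitives --- computing lattice widths, thin directions, and the inscribed/circumscribed ellipsoids --- for \emph{curved} convex bodies rather than polytopes, where vertices no longer control the geometry. These primitives are phrased as decision and optimization problems in the first-order theory of the reals with $\Oh(p)$ variables and $\Oh(k)$ polynomials of degree $\Oh(d)$; by Renegar-type complexity bounds (and connected-component bounds of Basu--Pollack--Roy) each such query is solvable in time $d^{\Oh(p)}$ times a polynomial in $k$ and $L$. Combining these singly-exponential-in-$p$ but polynomial-in-$L$ subroutines with the bounded-size recursion tree and the $\Oh(\log(\text{range}))$ binary-search steps yields an overall running time of the form $g(k,p)\cdot\poly(L)$, which is exactly fixed-parameter tractability in $k$ and $p$. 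The remaining care is bookkeeping the bit-sizes so that the numbers produced by successive quantifier-elimination calls and lattice reductions stay polynomially bounded along every root-to-leaf path of the recursion.
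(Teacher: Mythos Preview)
The paper does not prove this theorem at all: it is quoted as a black-box result of Khachiyan and Porkolab~\cite{KhachiyanP:00} and used later (in the proof of Theorem~\ref{thm:rwicim}) without any argument beyond checking that the hypotheses are met. So there is no ``paper's own proof'' to compare against.

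That said, your sketch is a faithful high-level outline of the actual Khachiyan--Porkolab method: reduce optimization to feasibility via quasiconvex sublevel sets, then run a Lenstra-style flat-or-round recursion where the polyhedral oracles (ellipsoidal rounding, width computation, thin directions) are replaced by calls to real algebraic geometry with Renegar/Basu--Pollack--Roy complexity bounds. The overall shape --- branching of size depending only on $p$, depth $p$, each node doing work singly exponential in $p$ and polynomial in $k$, $d$, and the bit-size --- is correct. The places you flag as ``remaining care'' (bit-size growth along a recursion path, and making the ellipsoidal approximation rigorous for curved bodies using only first-order-theory queries) are exactly where the real work in the original paper lies, and your proposal does not actually discharge them; but as a plan it matches the literature.
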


In order not to delve into unnecessary details, let us say that a quasiconvex function is a generalization of a convex function, and semialgebraic convex sets are a generalization of convex sets, containing e.g. the feasible regions of semidefinite programs (SDPs); see the book by Blekherman et al.~\cite{BlekhermanPT:12}. There are less general variants of this result which attain better running times, e.g. by Hildebrand and Köppe~\cite{HildebrandK13}. Since our aim is to simply prove fixed-parameter tractability, we choose to state the most general result.

Note that we are not aware of an application of Theorem~\ref{thm:semialgebraic_ip_fpt} which would use the fact that one can optimize over a \textit{region} more general than the integer hull of a polyhedron (i.e., a region given by non-linear convex constraints). There is a ``linearization trick'' which is widely used (including by Mnich and Wiese): a convex constraint $\vea \vex \leq f(x_i)$ whose domain is bounded by some number $N$ given in unary can be rewritten as $N$ linear constraints describing the piecewise linear approximation of $\vea \vex \leq f(x_i)$ which is exact on the $N$ integers of its domain. Then, the feasible region is the integer hull of a polyhedron. To the best of our knowledge, there is no result whose feasible region is given by a set of constraints that cannot be ``linearized'' as described above. So both us and Mnich and Wiese~\cite{MnichW:14} only need this result because of its generality in terms of the \textit{objective function}, not the \textit{feasible region}. It is an interesting open problem to find an application of Theorem~\ref{thm:semialgebraic_ip_fpt} whose feasible region is given by, for example, a semidefinite program in fixed dimension.

\subsection{Smith's rule -- structure of solutions when minimizing $\sum w_jC_j$}

Here we make a few basic observations about the structure of (optimal) solutions in the problem $R || \sum w_jC_j$. To do so, we utilize a useful way of visualizing the objective function $\sum w_jC_j$ called \textit{two-dimensional Gantt charts} (2D Gantt charts), which was introduced by Goemans and Williamson~\cite{GoemansW:00}.

Let us first introduce the following notation. Fix a machine $M_i$ and assume that the set of jobs scheduled to run on $M_i$ is given, denoted $J^i$. Whenever the index of a machine $i$ is clear from context, we omit it. For any set of jobs $S \subseteq J$ let $w(S) = \sum_{J_j \in S} w_j$ and $p(S) = \sum_{J_j \in S} p_j$. A 2D Gantt chart starts at point $(0, w(J^i))$ and ends at $(p(J^i), 0)$. Each job $J_j \in J^i$ is represented by a rectangle of length $p_j$ and height $w_j$ whose position is defined by a startpoint and an endpoint. The startpoint $(t,w)$ of a job is the endpoint of a previous job (or $(0, w(J^i))$ for the first job) while its endpoint is $(t+p_j, w-w_j)$. The value $\sum_{J_j \in J^i} w_j C_j$ is then simply the area beneath the upper sides of the rectangles; see Figure~\ref{fig:2dgantt}.

\begin{figure}[!bt]
  \includegraphics{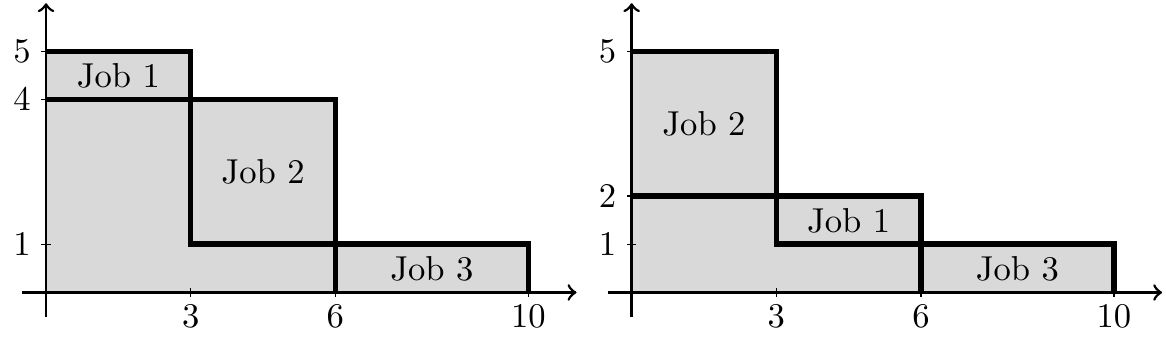}
  \caption{An example of a 2D Gantt chart with three jobs of lengths 3, 3 and 4 and weights 1, 3 and 1, respectively. To the left is one possible ordering, to the right is the ordering given by Smith's rule, producing an optimal schedule (minimizing the gray area).}\label{fig:2dgantt}
\end{figure}

For $\sum w_j C_j$ minimization on one machine with no precedence constraints (no restrictions on the order of the jobs) there is a simple observation about the structure of an optimal schedule:

\begin{lemma}{(Smith's rule~\cite{GoemansW:00})}
Given a set of jobs $J^i$, a schedule minimizing $\sum_{J_j \in J^i} w_j C_j$ is given by ordering the jobs by non-increasing $\rho_i(j) = w_j/p_j^i$.
\end{lemma}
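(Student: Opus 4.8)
The plan is to use the 2D Gantt chart picture together with a simple exchange argument. First I would fix a machine $M_i$ and the set of jobs $J^i$ scheduled on it, and recall that, since there are no precedence constraints, any permutation of $J^i$ yields a feasible one-machine schedule, and that $\sum_{J_j \in J^i} w_j C_j$ equals the area beneath the staircase of rectangles in the 2D Gantt chart. Every permutation produces a staircase from $(0, w(J^i))$ to $(p(J^i), 0)$ consisting of the same rectangles in some order, so minimizing $\sum w_j C_j$ is the same as choosing the order of the rectangles that minimizes the area under the staircase.

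Next I would argue by a local swap. Suppose in some schedule two jobs $J_a$ and $J_b$ are processed consecutively, with $J_a$ immediately before $J_b$, but $\rho_i(a) = w_a/p_a^i < \rho_i(b) = w_b/p_b^i$, i.e. they violate the non-increasing-$\rho$ order. Swapping them affects only the contribution of these two jobs (all jobs before are unchanged, and all jobs after keep the same completion times since $p_a^i + p_b^i$ is unchanged). A direct computation of the two-job contribution — equivalently, the area of the little region between the two staircases that differ only on this block — shows that the change in objective when going from order $(J_a, J_b)$ to $(J_b, J_a)$ is $w_a p_b^i - w_b p_a^i$, which is negative exactly when $w_a/p_a^i < w_b/p_b^i$. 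Hence the swap strictly decreases the objective (or leaves it unchanged in case of a tie), so no optimal schedule can contain such a violating adjacent pair.

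Finally I would conclude: starting from any schedule and repeatedly applying such swaps (a bubble-sort on the $\rho_i$ values) terminates at a schedule in which the jobs appear in non-increasing order of $\rho_i(j)$, with objective value no larger than the one we started from; applying this to an optimal schedule shows the sorted order is itself optimal. In case of ties in $\rho_i$, all orderings consistent with the non-increasing order achieve the same value, so any of them is optimal, which matches the statement. I do not expect a real obstacle here; the only mildly delicate point is the two-job area computation, and the 2D Gantt chart makes it transparent — the region swept out between the two orderings is exactly a rectangle-difference whose signed area is $w_a p_b^i - w_b p_a^i$.
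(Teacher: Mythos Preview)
Your exchange argument is correct and is in fact the classical proof of Smith's rule. Note, however, that the paper does not actually prove this lemma: it is stated with a citation to Goemans and Williamson and used as a black box, so there is no ``paper's own proof'' to compare against. Your computation of the swap difference $w_a p_b^i - w_b p_a^i$ is right, and the bubble-sort termination argument is the standard way to conclude; nothing is missing.
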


Since the ratios $\rho_i(j)$ correspond to slopes of the rectangles in a 2D Gantt chart, Smith's rule implies that the chart of an optimal schedule will have slopes which form a piecewise linear convex function. Goemans and Williams then go on to observe that for such a chart there is an alternate way of computing its area based on splitting it into triangles; see Figure~\ref{fig:smith}. That leads us to this lemma:

\begin{figure}[bt]
\captionsetup{singlelinecheck=off}
    \includegraphics{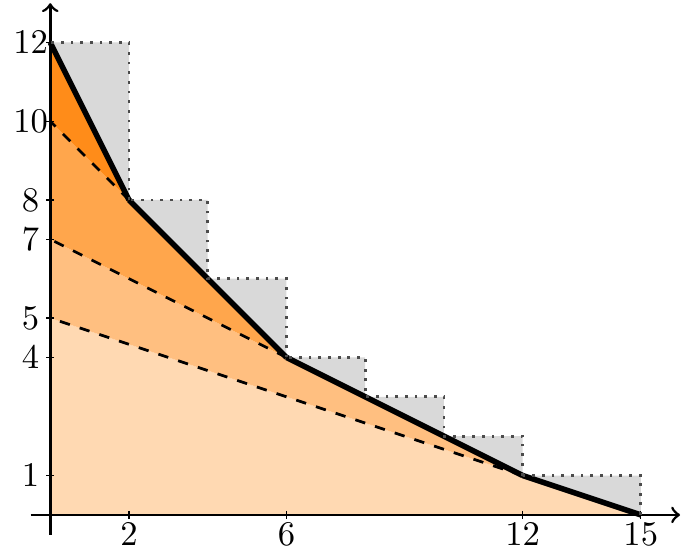}
\caption[Bla bla]{An alternate way of computing the objective function $\sum_{J_j \in J^i} w_j C_j$. In this examples there are following jobs:
    \begin{itemize}
    \item 1 job with processing time $2$ and weight $4,$
    \item 2 jobs with processing time $2$ and weight $2,$
    \item 3 jobs with processing time $2$ and weight $1,$
    \item 1 job with processing time $3$ and weight $1.$
    \end{itemize}
    }\label{fig:smith}
\end{figure}

\begin{lemma}\label{lem:chartArea}
Given jobs $J^i = \{J_1, \dots, J_l\}$ scheduled to run on machine $M_i$ such that $\rho(j) \geq \rho(j+1)$ for all $1 \leq i \leq l-1$, the optimal schedule has value 
$$\sum_{j=1}^l (\frac{1}{2} p(\{J_1,\dots, J_j\})^2 (\rho(j)-\rho(j+1)) +\frac{1}{2} w_j p_j).$$
\end{lemma}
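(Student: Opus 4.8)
The plan is to first invoke the preceding Smith's-rule lemma to write the optimal value in closed form, and then verify the claimed expression by a single summation-by-parts identity; the 2D Gantt chart of Figure~\ref{fig:smith} supplies the geometric picture behind that identity.

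\textbf{Setup.} Drop the machine index and write $p_j = p_j^i$ and $\rho(j) = w_j/p_j$. Put $C_j = p(\{J_1,\dots,J_j\}) = \sum_{k\le j} p_k$, so that $C_0 = 0$ and $C_j - C_{j-1} = p_j$, and adopt the convention $\rho(l+1) := 0$ (the ``slope'' of the chart past the last job, where it reaches the time axis). Since by Smith's rule processing the jobs in the given order of non-increasing $\rho$ is optimal and then $C_j$ is the completion time of $J_j$, the optimal schedule has value $\sum_{j=1}^l w_j C_j$. So it suffices to show
$$ \sum_{j=1}^l w_j C_j \;=\; \tfrac12\sum_{j=1}^l (\rho(j)-\rho(j+1))\,C_j^2 \;+\; \tfrac12\sum_{j=1}^l w_j p_j . $$

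\textbf{Core identity.} I would transform the first sum on the right-hand side. By summation by parts, the boundary terms vanishing precisely because $C_0 = 0$ and $\rho(l+1) = 0$,
$$ \sum_{j=1}^l (\rho(j)-\rho(j+1))\,C_j^2 \;=\; \sum_{j=1}^l \rho(j)\,(C_j^2 - C_{j-1}^2) \;=\; \sum_{j=1}^l \rho(j)\,p_j\,(C_j + C_{j-1}), $$
where the last step factors $C_j^2 - C_{j-1}^2 = (C_j - C_{j-1})(C_j + C_{j-1})$ and uses $C_j - C_{j-1} = p_j$. Since $\rho(j)p_j = w_j$ and $C_j + C_{j-1} = 2C_j - p_j$, this equals $\sum_{j=1}^l w_j(2C_j - p_j) = 2\sum_{j} w_j C_j - \sum_j w_j p_j$. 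Dividing by $2$, adding $\tfrac12\sum_j w_j p_j$, and comparing with the display above gives the claim.

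\textbf{Remarks.} I do not anticipate a real obstacle here; the only points needing care are fixing the convention $\rho(l+1)=0$ so that the $j=l$ term is accounted for, and checking that the boundary terms of the summation by parts cancel (which they do for exactly that reason). The same computation can be read directly off Figure~\ref{fig:smith}: replacing each job-rectangle by its diagonal turns the staircase whose area is $\sum_j w_j C_j$ into a convex polyline, removing a right triangle of area $\tfrac12 w_j p_j$ per job; and extending the $l$ segments of that polyline to the weight axis partitions the region beneath it into $l$ triangles, the $j$-th having base $C_j(\rho(j)-\rho(j+1))$ on the axis and height $C_j$, hence area $\tfrac12 C_j^2(\rho(j)-\rho(j+1))$. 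The algebraic version above is the quickest way to make this rigorous.
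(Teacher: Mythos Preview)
Your proof is correct. The summation-by-parts computation is clean and the boundary conventions $C_0=0$, $\rho(l+1)=0$ are exactly what is needed.

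The paper's own proof is purely geometric: it argues directly from the 2D Gantt chart picture (Figures~\ref{fig:smith} and~\ref{fig:geometry}), first separating off the $\tfrac12 w_j p_j$ triangles above the diagonal of each rectangle, and then computing the remaining area under the convex polyline by expressing each ``wedge'' between consecutive slopes as a difference of two right triangles with common base $p(\{J_1,\dots,J_j\})$ and heights determined via the tangents $\rho(j)$ and $\rho(j+1)$. Your argument is the algebraic dual of this: you start from the known value $\sum_j w_j C_j$ and verify the identity by Abel summation rather than by decomposing a region. What you gain is brevity and independence from the figure; what the paper's version gains is that the formula is \emph{derived} rather than merely checked, so a reader sees where the expression comes from. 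Your Remarks paragraph already sketches the geometric reading, so the two proofs are really two presentations of the same idea, and either is fine here.
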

\begin{proof}
Given the set $J^i$, the optimal schedule on the machine $M_i$ is determined according to Smith's rule. It is possible to divide the area as can be seen on Figure~\ref{fig:smith}. Note that the gray area is determined by the set of jobs $J^i$ and in fact can be computed just from the knowledge of $J^i$ as $\frac{1}{2}w_j p_j$ for each job $J_j\in J^i.$ This results in the linear term in the statement.

It remains to compute the area under the bold line (orange area in color printing) of Figure~\ref{fig:smith} -- i.e. the area under the piecewise linear function (again determined by the observed structure of the solution). We divide the area into triangles and compute the total area as a sum of those.

For a job $J_j$ we compute the contribution of the job as the associated area. The total area can be expressed as a difference of area of two impedance triangles -- see Figure~\ref{fig:geometry} for illustration. The length of the common ordinate parallel to processing time axis is $p(\{J_1,\dots, J_j\}),$ so it remains to establish the height $b$ of the two triangles. We express it with the help of a tangent rule as $b = a\tan\varphi$ (and $b' = a\tan\varphi$)

It is straightforward to express the total area contribution of the job $J_j$ as 
\begin{equation}
\begin{split}
J_j & = \frac{1}{2}p\big(\{J_1,\dots, J_j\}\big)\cdot (b'-b) \\
    & = \frac{1}{2}p\big(\{J_1,\dots, J_j\}\big)\cdot p\big(\{J_1,\dots, J_j\}\big)\big(\rho(j) - \rho(j+1)\big) \\
    & = \frac{1}{2}p\big(\{J_1,\dots, J_j\}\big)^2(\rho(j) -\rho(j+1)).
\end{split}
\end{equation}
Summing over all jobs in $J^i$ finishes the proof.

\begin{figure}[bt]
    \includegraphics{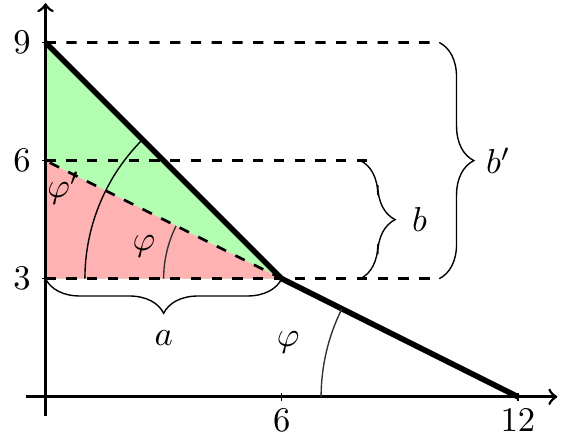}
    \caption{It is possible to compute the area of the lighter gray (green in color printing) rectangle as a difference between the area of an auxiliary triangle ($\frac{1}{2}a\cdot b'$ in the figure) and the dark gray (red in color printing) triangle. Note that it is possible to compute $b,b'$ from the value of $a$ and the tangent of $\varphi, \varphi'.$}\label{fig:geometry}
\end{figure}
\end{proof}

In our setup the set $J^i$ will be given by integers $x_1^i, \dots, x_\Theta^i$ representing how many jobs of each type are scheduled to run on machine $M_i$. Observe that for each type $1 \leq j\leq\Theta,$ jobs of type $j$ have identical slope $\rho(j)$ and thus correspond to a single triangle in the chart. We have the following corollary:

\begin{corollary}\label{cor:separable_convex}
Given integers $x_1^i, \dots, x_\Theta^i$ representing numbers of jobs of each type scheduled to run on machine $M_i$ and a permutation $\pi_i: [\Theta] \rightarrow [\Theta]$ such that $\rho_i(\pi_i(j)) \geq \rho_i(\pi_i(j+1))$ for all $1 \leq j\leq \Theta-1$, the optimal schedule has value $\sum_{i=1}^\Theta (\frac{1}{2} (z_j^i)^2 + \frac{1}{2} x_j^i p_j^i w_j),$ where $z_j^i = \sum_{l=1}^j p_l^i x_l^i$.
\end{corollary}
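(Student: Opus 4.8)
The plan is to derive the formula directly from Lemma~\ref{lem:chartArea} by exploiting the fact that all jobs of a given type lie on one linear piece of the convex Gantt profile, so that the per-job sum in Lemma~\ref{lem:chartArea} collapses to a single summand per type. First relabel the job types so that the permutation $\pi_i$ becomes the identity, i.e.\ assume $\rho_i(1) \ge \rho_i(2) \ge \dots \ge \rho_i(\Theta)$ (if $p_j^i = \infty$ then necessarily $x_j^i = 0$ and that type can be dropped from $M_i$). The jobs scheduled on $M_i$ then form the multiset containing $x_j^i$ copies of a job with processing time $p_j^i$ and weight $w_j$ for each $j \in [\Theta]$. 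List these $\sum_j x_j^i$ individual jobs in any order of non-increasing ratio $\rho_i(\cdot)$, breaking ties (in particular between copies of the same type) arbitrarily. Since the optimal single-machine value depends only on this multiset, Lemma~\ref{lem:chartArea} applies verbatim and expresses it as $\sum_{\ell} \bigl( \tfrac12 p(\{J_1,\dots,J_\ell\})^2 (\rho(\ell)-\rho(\ell+1)) + \tfrac12 w_\ell p_\ell \bigr)$, with $\rho$ of the non-existent job past the last one set to $0$.

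Next carry out the collapse. The linear contribution of all copies of type $j$ is $x_j^i \cdot \tfrac12 w_j p_j^i$, and summing over $j$ produces the term $\sum_{j=1}^\Theta \tfrac12 x_j^i p_j^i w_j$. For the quadratic contribution, observe that $\rho(\ell) = \rho(\ell+1)$ whenever jobs $\ell$ and $\ell+1$ are copies of the same type, so the only surviving summands are those where $\ell$ is the last copy of its type in the chosen order; for type $j$ this summand is $\tfrac12 \bigl(\sum_{l=1}^{j} p_l^i x_l^i\bigr)^2 \bigl(\rho_i(j) - \rho_i(j+1)\bigr) = \tfrac12 (z_j^i)^2 \bigl(\rho_i(j) - \rho_i(j+1)\bigr)$, where $z_j^i = \sum_{l=1}^j p_l^i x_l^i$ and $\rho_i(\Theta+1) := 0$. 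Adding the two contributions and summing over $j \in [\Theta]$ gives the claimed per-machine closed form. Finally, since $\rho_i(j) - \rho_i(j+1) \ge 0$ and $x \mapsto \tfrac12 x^2$ is convex, each summand is a convex function of $z_j^i$ and $x_j^i$; as $z_j^i$ is a fixed nonnegative integer linear form in $x_1^i, \dots, x_j^i$, introducing the $z_j^i$ as auxiliary variables makes the whole per-machine objective separable convex, which is exactly the shape required by Theorem~\ref{thm:nfoldip_convex}.

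I do not expect a real obstacle here; the statement is essentially a bookkeeping corollary of Lemma~\ref{lem:chartArea}. The one step needing care is the collapse: one must check that consecutive copies of one type contribute nothing to the quadratic part (immediate, their $\rho$-values coincide) and that the telescoped prefix sum lands exactly on $p(\{J_1,\dots,J_{\text{last copy of type }j}\}) = z_j^i$ regardless of how the within-type ties were ordered. Ties between \emph{different} types are equally harmless, since a zero slope increment simply merges two triangles.
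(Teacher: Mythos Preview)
Your argument is correct and follows the same idea the paper sketches in the sentence preceding the corollary: jobs of one type share a common slope, so they collapse to a single triangle in the 2D Gantt chart, and Lemma~\ref{lem:chartArea} then yields the per-type formula. You have simply written this out carefully, including the tie-handling and the verification that the surviving prefix sum equals $z_j^i$. Note also that you (correctly) retain the factor $\rho_i(j)-\rho_i(j+1)$ on the quadratic term, which is accidentally omitted in the corollary's displayed formula but reappears in the paper's later use of the result.
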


\section{\W{1}-hardness of unary $P || C_{max}$ and $P || \sum w_j C_j$}
\label{sec:hardness}

The \textsc{Bin Packing} problem asks whether $k$ bins of size $B$ suffice to contain a set of $n$ items represented by integers $o_1, \ldots, o_n$.
It is straightforward from the \textsc{2-Partition} problem that this is \NP-complete when $o_1, \ldots, o_n$ are large numbers given in binary.
Remarkably, Jansen et al.~\cite{JansenKMS:13} prove that even when the numbers are given in unary (i.e., assume $\max_i o_i \leq n$), there is likely no $f(k)n^{\Oh(1)}$ algorithm as the problem is \W{1}-hard parameterized by $k$. This hardness obviously translates to makespan minimization: deciding whether there is a schedule of jobs $o_1, \ldots, o_n$ on $k$ machines with $C_{max} = B$ is equivalent to deciding the aforementioned \textsc{Unary Bin Packing} problem.

To the best of our knowledge, the analogous question regarding the complexity of $P || \sum w_j C_j$ parameterized by the number of machines $m$ was not yet considered. We prove that it is \W{1}-hard by once again reducing \textsc{Unary Bin Packing} to it.

Jansen et al. note that their hardness result stands even for \textit{tight} instances of \textsc{Unary Bin Packing}, that is, instances where $\sum_i o_i = kB$.
Given a tight instance of \textsc{Unary Bin Packing}, construct an instance of ${P || \sum w_j C_j}$ consisting of $k$ machines and $n$ jobs with $p_j = w_j = o_j$ for $j=1, \ldots, n$. Let $J^\ell$ and $\hat{C}_\ell$ for $\ell=1,\ldots,k$ denote the set of jobs scheduled on machine $\ell$ and the completion time of machine $\ell$, respectively. Because the ratio $\frac{w_j}{p_j}$ is identical for all jobs, the ordering of jobs on a each machine is irrelevant. Thus, the contribution of a machine $\ell$ to the objective function is

$$ \frac{1}{2}\hat{C}_\ell^2 + \sum_{j \in J^\ell} \frac{p_j^2}{2} \enspace .$$

Summing over all machines, we get

$$ \Big( \sum_{j \in J^\ell} \frac{p_j^2}{2} \Big) + \Big( \sum_{\ell=1}^k \frac{1}{2}\hat{C}_\ell^2 \Big)  \enspace .$$

We argue that a schedule with cost $\frac{kB^2}{2} + kB$ exists if and only if the original \textsc{Unary Bin Packing} instance is a ``yes'' instance. Observe that there is only one schedule of this cost, namely one where $\hat{C}_\ell = B$ for all $\ell = 1, \ldots, k$. Let $\Delta_\ell = \hat{C}_\ell - B$ for all $\ell = 1, \ldots, k$. Disregarding the term $\sum_{j \in J^\ell} \frac{p_j^2}{2}$ which is independent of the schedule, the contribution becomes

$$ \sum_{\ell=1}^k \frac{1}{2}(B+\Delta_\ell)^2 = \sum_{\ell = 1}^k \frac{1}{2}(B^2 + 2\Delta_\ell B + \Delta_\ell^2) \enspace .$$

Since $\sum_\ell \Delta_\ell = 0$,  $\sum_\ell 2\Delta_\ell B = 0$. Thus $\sum_\ell \Delta_\ell^2 = 0$ exactly when $\Delta_\ell = 0$ for all $\ell$, that is, when $\hat{C}_\ell = B$ for all $\ell$. That concludes the proof of Theorem~\ref{thm:sched_hardness}.

\section{\FPT results}
\label{sec:scheduling}
\subsection{Warmup: $P||C_{max}$ and $Q||C_{max}$ parameterized by $p_{max}$}

In this section we show that $P||C_{max}$ and its generalization $Q || C_{max}$ is \FPT parameterized by $p_{max}$.

Denote $\Theta:= p_{max}$. We say that a job of length $j$ is of \textit{type} $j$. On input we have $n$ jobs of at most $\Theta$ types, given as numbers $n_1, \dots n_\Theta$ encoding the number of jobs of given type (in binary). For every machine $i$ we have variables $x^i_1, \dots, x^i_{\Theta}$; in the solution the interpretation of $x^i_j$ is ``how many jobs of type $j$ are scheduled on machine $i$''. Let us fix a time $T \in \N$; the IP we will formulate will be feasible if there is a schedule with $C_{max} \leq T$.

To assure that each job is scheduled on some machine, we state these globally uniform constraints:
\begin{equation}\label{eq:pcmax_global}
\sum_{i=1}^m x_j^i = n_j \quad\forall 1 \leq j \leq \Theta.
\end{equation}

To assure that, for every machine $M_i$, $1 \leq i \leq m$, the lengths of jobs scheduled on $M_i$ sum up to at most $T$, we state a locally uniform constraint:
\begin{equation}\label{eq:pcmax_local}
\sum_{j=1}^\Theta j x_j^i \leq T
\end{equation}

(Note here that using inequalities instead of equations does not cause problems: we can simply introduce a slack variable for every inequality: $\sum_{j=1}^\Theta j x_j^i + x_s^i = T,
x_s^i \geq 0$; it is also possible to add a suitable number of unit-processing time jobs and work with equalities directly; similarly in the following.)

Clearly if this program is feasible, then there exists a schedule with $C_{max} \leq T$. Finding minimum $T$ can be then done in polynomially many steps by binary search. Thus we want to show that checking feasibility is \FPT by applying Theorem~\ref{thm:nfoldip}.

To apply Theorem~\ref{thm:nfoldip}, we need to bound the values $r,s,t$ and $a$. Clearly the brick size is $t=\Theta$, the number of globally uniform constraints is $r=\Theta$ and the number of locally uniform constraints per brick is $s=1$. Finally, the largest coefficient is $a=\Theta$.

\paragraph{$\mathbf{Q || C_{max}}$} Now we are given speeds $s_i$ for every machine, such that executing a job with processing time $j$ takes time $j/s_i$. The globally uniform constraints~(\ref{eq:pcmax_global}) are the same, but the locally uniform constraints~(\ref{eq:pcmax_local}) now become:
\begin{equation}
\sum_{j=1}^\Theta j x_j^i \leq s_i T.
\end{equation}

Observe that only the right hand side differs for every machine. This finishes the proof of part \eqref{thm:qcmax_it} of Theorem~\ref{thm:nfold_sched}.

\subsection{$R || C_{max}$ parameterized by $p_{max}$ and $K$}

Now we turn to the unrelated machines model. Observe that with parameters $p_{max}$ and the number of kinds of machines $K$, there are at most $\Theta = (p_{max}+1)^K$ possible vectors $\vep_i$ of processing times with respect to kinds of machines, which we call \textit{types}, and each job is of a certain type. The input is then again given (in binary) by $\Theta$ integers $n_1, \dots, n_\Theta$ specifying the number of jobs of each type.

As before, we will describe an $n$-fold IP solving the problem. We have $n\Theta$ variables $x_j^i$ with the same interpretation as above. The globally uniform constraints are the same as before, (\ref{eq:pcmax_global}).

In the previous examples the locally uniform constraints were used to specify that the jobs assigned to each machine finish by time $T.$ However, now we need to specify a different constraint for each \textit{kind} of machine, which might seem hard to do ``uniformly''. Fortunately, because the number of kinds of machines is bounded, we can actually specify all constraints \textit{simultaneously} and make all but ``the right one'' irrelevant by differing right hand side.

Formally, let $B$ be some number bigger than $n p_{max}$, then for machine $M_i$ which is of kind $k$ we have locally uniform constraints
\begin{alignat*}{2}
\sum_{j=1}^{\Theta} p_j^{k'} x_j^{k'} &\leq B \qquad &\forall 1 \leq k' \neq k \leq K, \\
\sum_{j=1}^{\Theta} p_j^{k} x_j^{k} &\leq T. &
\end{alignat*}
In the above constraints whenever $p_j^k = \infty$ we replace it by zero and forbid the job to be run on this kind of machine by specifying appropriate upper bounds:
\begin{alignat}{3}
x_j^i &\leq n_j \qquad &&\forall 1 \leq k \leq K, p_j^k \in [p_{max}]\label{eq:rcmax_ub}\\
x_j^i &\leq 0   &&\forall 1 \leq k \leq K, p_j^k = \infty,\\
x_j^i &\geq 0   &&\forall i,j\label{eq:rcmax_lb}.
\end{alignat}
Observe that the constraints above are indeed locally uniform -- they are identical for every machine up to the right hand side. Thus we have defined an $n$-fold IP, which is feasible if there is a schedule with $C_{max} \leq T$. It remains to observe that all of $r,s,t$ and $a$ are bounded by our choice of parameters $p_{max}$ and $K$: clearly $t=\Theta$, $r=\Theta$, $s=K$ and $a=p_{max}$. Using Theorem~\ref{thm:nfoldip} concludes the proof of part \eqref{thm:rcmax_it} of Theorem~\ref{thm:nfold_sched}.

\subsection{$R || \sum w_j C_j$ parameterized by $p_{max} + w_{max}$ and $K$}
Here we turn our attention to the $\sum w_j C_j$ objective.

Recall what follows from Corollary~\ref{cor:separable_convex}: we have shown that, given integers $x_1^i, \dots, x_\Theta^i$ representing numbers of jobs of each type to be scheduled on machine $M_i$, the contribution of $M_i$ to the objective function is  $f^i(\vex^i, \vez^i) = \sum_{j=1}^\Theta (\frac{1}{2} (z_j^i)^2 (\rho_i(j)-\rho_i(j+1)) + \frac{1}{2} x_j^i p_j^i w_j)$ where $z_j^i = \sum_{l=1}^j p_l^i x_l^i$ and $\pi_i: [\Theta] \rightarrow [\Theta]$ is a permutation such that $\rho_i(\pi_i(j)) \geq \rho_i(\pi_i(j+1))$ for all $j \in [\Theta -1]$.
Observe that $f^i$ is separable convex and thus also $f = \sum_i f^i.$ 
Our goal now is to once again formulate an $n$-fold IP, however this time we need to introduce new variables $z_{j,k}^i$, for $j \in [\Theta], k \in [K]$ and $i \in [m].$ For a machine $M_i$ of kind $k$, we want $z_j^i = z_{j,k}^i,$ so that we can use the formulation of $f$ which we just stated. Notice that we are introducing many ``unnecessary'' variables $z_{j,k}^i$ for kinds $k' \neq k$. This is in order to have a uniform set of local constraints.

The globally uniform constraints (\ref{eq:pcmax_global}) stay the same. The locally uniform constraints serve to project the brick $\vex^i$ to the variables $z_{j,k}^i,$ with permutations $\pi_k$ as defined above:
$$
\sum_{l=1}^j x_l^i p_{\pi_k(l)}^i = z_{j,k}^i  \qquad \forall j \in [\Theta], \forall k \in [K], \forall i \in [m].
$$

It is in the objective function where we distinguish which $z_{j,k}^i$ are relevant for which machine. In the following, let $z_j^i = z_{j,k}^i$ if machine $M_i$ is of kind $k$:
$$
f(\vex, \vez) =  \sum_{i=1}^m \sum_{j=1}^\Theta (\frac{1}{2} (z_j^i)^2 (\rho_i(j)-\rho_i(j+1)) + \frac{1}{2} x_j^i p_j^i w_j).
$$

Lower and upper bounds (\ref{eq:rcmax_ub})-(\ref{eq:rcmax_lb}) stay as before. Applying Theorem~\ref{thm:nfoldip_convex} concludes the proof of part \eqref{thm:rwici_it} of Theorem~\ref{thm:nfold_sched}.

\subsection{$R|| \sum w_j C_j$ parameterized by $m$ and $\theta$}

Finally, we examine the same scenario as before, but this time we restrict the number of machines $m$ to be a parameter, but, in turn, relax the restriction from $p_{max}, w_{max}$ to $\theta$. We use the same ILP formulation which Mnich and Wiese used to show that $R || C_{max}$ is \FPT with respect to $m$ and $\theta$. However, the careful analysis in Corollary~\ref{cor:separable_convex} was needed to show that the objective function is convex in order to apply Theorem~\ref{thm:semialgebraic_ip_fpt}.

Let $\Theta \leq \theta^m$ be the number of distinct types of jobs. We have variables $x_j^i$, $j \in [\Theta], i \in [m]$ and permutations $\pi_i$ with the same meaning as above. Notice that the following can be seen as a subset of the previous $n$-fold IP:
\begin{alignat*}{2}
& \text{minimize} \\
& f(\vex, \vez) =  \sum_{i=1}^m \sum_{j=1}^\Theta \biggl(\frac{1}{2} (z_j^i)^2 (\rho_i(j) &-&\rho_i(j+1)) + \frac{1}{2} x_j^i p_j^i w_j \biggr) \\
& \text{subject to} \\
& \sum_{i=1}^m x_j^i = n_j   &\forall& j \in [\Theta] \\
& \sum_{l=1}^j x_l^i p_{\pi_i(l)}^i = z_j^i   &\forall& j \in [\Theta], \forall i \in [m]
\end{alignat*}

In order to apply Theorem~\ref{thm:semialgebraic_ip_fpt}, we observe that both the number of variables $2 \Theta m$ and the number of constraints $\Theta + \Theta m$ are fixed parameters, and that the objective function $f$ is convex as shown in the previous subsection. This concludes the proof of Theorem~\ref{thm:rwicim}.

\section{Conclusions}
Although much is known about approximating scheduling problem, little is known from the parameterized complexity point of view about the most basic problems. The purpose of this paper is twofold. The first is to show new \FPT algorithms for some scheduling problems. The second is to demonstrate the use of $n$-fold integer programming, a recent and powerful variable dimension technique. We hope to encourage research in both directions. To facilitate this research, we point out the following open problems:


\begin{itemize}
  \item Minimizing weighted flow time $P | r_j | \sum w_j F_j$ parameterized by $p_{max} + w_{max}$.
  \item $P || C_{max}$ parameterized by $\theta$ instead of $p_{max}$.
  \item $R | pmtn | \sum C_j$ parameterized by $m$ and $p_{max}$; this is justified by the problem being strongly \NP-hard~\cite{Sitters:05}, so parameterizing by $p_{max}$ is not enough.
  \item $P || C_{max}$ parameterized by $p_{max}$ with both $m$ and $n$ given in binary; this might be possible using the recently developed result for huge $n$-fold IPs due to Onn and Sarrabezolles~\cite{OnnS:15}.
  \item Multi-agent scheduling was studied by Hermelin et al.~\cite{HermelinKSTW15}; what results can be obtained by applying $n$-fold IP?
  \item Turning our attention to developing the techniques we use, we ask if $4$-block $n$-fold IP (a generalization of $n$-fold IP) is \FPT or \W{1}-hard; only an \XP algorithm is known so far~\cite{HemmeckeKW14}.
  \item We are also interested in further applications of $n$-fold IP and quasiconvex minimization over convex sets in fixed dimension.
\end{itemize}

\paragraph{Acknowledgements.}
We would like to thank René van Bevern for pointing us to much related work.



\bibliographystyle{plain}
\bibliography{arxiv}

\end{document}